\newlength{\saut}
\newcommand{\alphabold}{\bm {\kappa}}
\newcommand{\cbold}{{\mathbf{k}}}
\newcommand{\tmu}{{\tilde\mu}}
\newcommand{\coupesep}{|\!|}
\newcommand{\coupe}[2]{
    {\mbox{$\langle$}} {#1} {\coupesep} {#2} \mbox{$\rangle$}}
\newcommand{\cut}[2]{\coupe{#1}{#2}}
\newcommand{\imp}{\rightarrow}
\newcommand{\defeq}{\triangleq}
\newcommand{\automath}[1]{\relax\ifmmode{#1}\else{$#1$}\fi}
\newcommand{\lbvtstar}{\automath{\overline{\lambda}_{[\lowercase{lv}\tau\star]}}}
\newcommand{\lbv} {\automath{\overline{\lambda}_{\lowercase{lv}}}}
\newcommand{\catch}{\texttt{catch}}
\newcommand{\throw}{\texttt{throw}}
\newcommand{\oset}[3][0ex]{%
  \mathrel{\mathop{#3}\limits^{
    \vbox to#1{\kern-2\ex@
    \hbox{#2}\vss}}}}
\newcommand{\red}{\rightarrow}
\newcommand{\negt}{{\bot\!\!\!\!\!\bot}}
\newcommand{\indpt}[2]{{#1\,\#\,#2}}
\newcommand{\compat}[2]{{#1\diamond#2}}
\newcommand{\stext}{\vartriangleleft}
\newcommand{\join}[2]{\mathsf{join}(#1,#2)}
\newcommand{\C}{\mathcal{C}}
\renewcommand{\S}{\mathcal{S}}
\newcommand{\pole}{{\bot\!\!\!\bot}}
\newcommand{\orth}{{\pole}}
\newcommand{\prfcasetext}{Case}
\newcommand{\prfcase}[1]{\paragraph*{\normalfont \textbf{\prfcasetext} #1.}}
\newcommand{\prfcases}[1]{\paragraph*{\normalfont \textbf{{\prfcasetext}s} #1.}}
\newcommand{\fvn}[2]{\|#2\|_{#1}}
\newcommand{\tvn}[2]{|#2|_{#1}}
\newcommand{\fvF}[1]{\fvn{F}{#1}}
\newcommand{\fvE}[1]{\fvn{E}{#1}}
\newcommand{\fve}[1]{\fvn{e}{#1}}
\newcommand{\tvv}[1]{\tvn{v}{#1}}
\newcommand{\tvV}[1]{\tvn{V}{#1}}
\newcommand{\tvt}[1]{\tvn{t}{#1}}
\newcommand{\real}{\Vdash}
\newcounter{rep}
\newcounter{nrep}
\spnewtheorem{repthm}[rep]{Theorem}{\bfseries}{\rmfamily}
 {\end{repthm}}
\spnewtheorem{nthm}[nrep]{Theorem}{\bfseries}{\rmfamily}
\newenvironment{ntheorem}[1]{%
\setcounter{nrep}{#1}
\addtocounter{nrep}{-1}
 \begin{nthm}}%
 {\end{nthm}}
\spnewtheorem{replm}[rep]{Lemma}{\bfseries}{\rmfamily}
 {\end{replm}}
\spnewtheorem{repprop}[rep]{Proposition}{\bfseries}{\rmfamily}
 {\end{repprop}}
\newenvironment{mycenter}{%
  \setlength\topsep{1pt}
  \setlength\parskip{1pt}
  \begin{center}
}{%
  \end{center}
}
\newcommand{\autorule}[1]{\relax\ifmmode{\scriptstyle(#1)}\else$(#1)$\fi}
\newcommand{\crule}    {\autorule{c     }}
\newcommand{\lrule}    {\autorule{l     }}
\newcommand{\xrule}    {\autorule{x     }}
\newcommand{\alpharule}{\autorule{\alpha}}
\newcommand{\cboldrule}    {\autorule{    \cbold}}
\newcommand{\alphaboldrule}{\autorule{\alphabold}}
\newcommand{\lft}[1]{\uparrow^{#1}}
\newcommand{\liftVrule}{\autorule{\lft{V}}}
\newcommand{\liftErule}{\autorule{\lft{E}}}
\newcommand{\lifttrule}{\autorule{\lft{t}}}
\newcommand{\lifterule}{\autorule{\lft{e}}}
\newcommand{\murule   }{\autorule{\mu      }}
\newcommand{\mutrule  }{\autorule{\tmu     }}
\newcommand{\eagerrule}{\autorule{\tmu^{[]}}}
\newcommand{\imprrule}{\autorule{\imp_r         }}
\newcommand{\implrule}{\autorule{\imp_l         }}
\newcommand{\ffalllrule}{\autorule{\forall_l^1      }}
\newcommand{\ffallrrule}{\autorule{\forall_r^1      }}
\newcommand{\sfalllrule}{\autorule{\forall_l^2      }}
\newcommand{\sfallrrule}{\autorule{\forall_r^2      }}
\newcommand{\epsrule    }{\autorule{\varepsilon }}
\newcommand{\tautrule   }{\autorule{\tau_t      }}
\newcommand{\tauErule   }{\autorule{\tau_E      }}
\newcommand{\hsep}{\qquad\qquad}
\newcommand{\lmmt}{\lambda\mu\tmu}
\newcommand{\vlong}[1]{#1}
\begin{document}
\title{Realizability Interpretation And Normalization Of Typed Call-by-Need $\lambda$-calculus With Control}
\author{\'Etienne Miquey\inst{1,2} \and Hugo Herbelin\inst{2}}
           
\institute{
Équipe Gallinette\\
Inria, LS2N (CNRS)\\
Université de Nantes\\
\and
Équipe $\pi r^2$\\
Inria, IRIF (CNRS)\\
Université Paris-Diderot\\
\email{{etienne.miquey,herbelin}@inria.fr}
}

\maketitle

\begin{abstract}
  We define a variant of Krivine realizability where realizers are pairs of a
  term and a substitution. This variant allows us to prove the
  normalization of a simply-typed call-by-need $\lambda$-calculus with
  control due to Ariola {\em et al}. Indeed, in such call-by-need
  calculus, substitutions have to be delayed until knowing if an
  argument is really needed.
  We then extend the proof to a call-by-need
  $\lambda$-calculus equipped with a type system equivalent to
  classical second-order predicate logic, representing one step
  towards proving the normalization of the call-by-need classical
  second-order arithmetic introduced by the second author to provide
  a proof-as-program interpretation of the axiom of dependent
  choice.
\end{abstract}

\section*{Introduction}

\subsection*{Realizability-based normalization}

Normalization by realizability 
is a standard technique to prove the normalization of typed $\lambda$-calculi. 
Originally introduced by Tait~\cite{Tait67} to prove
the normalization of System T, it was extended by Girard to prove the
normalization of System~F~\cite{Girard71}.
This kind of techniques, also called normalization by reducibility 
or normalization by logical relations,
works by interpreting each type by a set of typed or untyped terms seen as realizers of
the type, then showing that the way these sets of realizers are built preserve
properties such as normalization. 
Over the years, multiple uses and generalization of this method have been
done, for a more detailed account of which
we refer the reader to the work of Gallier~\cite{Gallier90}.

Realizability techniques were adapted to the normalization 
of various calculi for classical logic (see e.g. \cite{BarBer96a,Parigot00}). A specific framework
tailored to the study of realizability for classical logic has been designed by
Krivine~\cite{Krivine04} on top of a $\lambda$-calculus with control
whose reduction is defined in terms of an abstract machine. 
In such a machinery, terms are evaluated in front of stacks; and control (thus classical logic)
is made available through the possibility of saving and restoring stacks.
During the last twenty years, Krivine's classical realizability turned out to be
fruitful both from the point of view of logic, leading to the construction of 
new models of set theory, and generalizing in particular the technique of Cohen's forcing~\cite{Krivine11,Krivine12,Krivine14};
and on its computational facet, providing alternative tools to
the analysis of the computational content of classical programs\footnote{See for instance
\cite{Miquel11} about witness extraction or \cite{GuiMiq16,GuiM16} 
about specification problems.}.

Noteworthily, Krivine realizability is one of the approaches contributing to 
advocating the motto
that through
the Curry-Howard correspondence, with new programming instructions come 
new reasoning principles\footnote{For instance, one way to realize
the axiom of dependent choice in classical realizability is
by means of an extra instruction \texttt{quote}~\cite{Krivine03}.}.
Our original motivation for the present work is actually in line with this idea,
in the sense that our long-term purpose is to give a realizability interpretation
to $\text{dPA}^\omega$, a call-by-need calculus defined by the second 
author~\cite{Herbelin12}. 
In this calculus, the lazy evaluation is indeed a fundamental ingredient in order to 
obtain an executable proof term for the axiom of dependent choice.

\subsection*{Contributions of the paper}
In order to address the normalization of typed call-by-need $\lambda$-calculus,
we design a variant of Krivine's classical realizability, where
the realizers 
are closures (a term with a substitution for its free variables).
The call-by-need $\lambda$-calculus with control that we consider is the $\lbvtstar$-calculus. 
This calculus, that was defined by Ariola {\em et al.}~\cite{AriDowHerNakSau12},
is syntactically described in an extension with explicit substitutions of the
$\lmmt$-calculus~\cite{CurHer00,HerbelinHdR,Munch09}.
The syntax of the $\lmmt$-calculus itself refines the syntax of the $\lambda$-calculus 
by syntactically distinguishing between \emph{terms} and \emph{evaluation contexts}.
It also contains \emph{commands} which combine terms and
evaluation contexts so that they can interact together.
Thinking of evaluation contexts as stacks and commands as states, 
the $\lmmt$-calculus can also be seen as a syntax for abstract machines. 
As for a proof-as-program point of view, the $\lmmt$-calculus and its variants 
can be seen as a term syntax for proofs of Gentzen's sequent calculus. 
In particular, the $\lmmt$-calculus contains control operators 
which give a computational interpretation to classical logic.


We give a proof of normalization first for the simply-typed $\lbvtstar$-calculus\footnote{Even though
it has not been done formally, the normalization of the $\lbv$-calculus presented in~\cite{AriDowHerNakSau12}
should also be derivable from Polonowski's proof of strong normalization of the non-deterministic $\lmmt$-calculus~\cite{Polonowski04}. 
The $\lbv$-calculus (a big-step variant of the $\lbvtstar$-calculus introduced in Ariola {\em et al.}) is indeed a particular 
evaluation strategy for the $\lmmt$-calculus, so that the strong normalization of the non-deterministic variant of the latter 
should imply the normalization of the former as a particular case.}, then for a type system with first-order 
and second-order quantification.
While we only apply our technique to the normalization of the \lbvtstar-calculus, 
our interpretation incidentally suggests a way to adapt Krivine realizability to other call-by-need settings. 
This paves the way to the computational interpretation of classical
proofs using lazy evaluation or shared memory cells,
including the case of the call-by-need second order arithmetic $\text{dPA}^\omega$~\cite{Herbelin12}.

%
%
%
%

\section{The {\lbvtstar}-calculus}
\label{s:lbvtstar}
\subsection{The call-by-need evaluation strategy}
The call-by-need evaluation strategy of
the $\lambda$-calculus evaluates arguments of functions 
only when needed, and, when needed, shares their evaluations across
all places where the argument is required. 
The call-by-need evaluation is at the heart of a functional
programming language such as Haskell. It has in common with
the call-by-value evaluation strategy that all places
where a same argument is used share the same value.  
Nevertheless, it observationally behaves like the 
call-by-name evaluation strategy (for the pure $\lambda$-calculus),
in the sense that a given computation
eventually evaluates to a value if and only if it evaluates to the same
value (up to inner reduction) along the call-by-name evaluation.
In particular, in a setting with non-terminating computations, it is not observationally equivalent to the call-by-value
evaluation. Indeed, if the evaluation of a useless argument loops in
the call-by-value evaluation, the whole computation loops, which is not
the case of call-by-name and call-by-need evaluations.

These three evaluation strategies can be turned into equational theories.
For call-by-name and call-by-value, this was done by Plotkin through
continuation-passing-style (CPS) semantics characterizing these theories~\cite{Plotkin75}.  
For the call-by-need evaluation strategy, a specific equational theory
reflecting the intensional behavior of the strategy into a semantics was proposed independently by
Ariola and Felleisen~\cite{AriFel93}, and by Maraist, Odersky and Wadler~\cite{MarOdeWad98}.
A continuation-passing-style semantics was proposed in the 90s
by Okasaki, Lee and Tarditi~\cite{OkaLeeTar94}.
However, this semantics does not ensure normalization of simply-typed call-by-need
evaluation, as shown in~\cite{AriDowHerNakSau12}, thus failing to
ensure a property which holds in the simply-typed call-by-name
and call-by-value cases.

Continuation-passing-style semantics \emph{de facto} gives a semantics to the
extension of $\lambda$-calculus with control operators\footnote{That is to say with operators
such as Scheme's {\tt callcc}, Felleisen's ${\mathcal C}$, ${\mathcal K}$, or
${\mathcal A}$ operators~\cite{FelFriKohDub86}, Parigot's $\mu$ and [~]
operators~\cite{Parigot91}, Crolard's $\catch$ and $\throw$
operators~\cite{Crolard99}.}. In particular, even though call-by-name
and call-by-need are observationally equivalent on pure
$\lambda$-calculus, their different intentional behaviors induce
different CPS semantics, leading to different observational behaviors when control operators are
considered.
On the other hand, the semantics of calculi with control can also be
reconstructed from an analysis of the duality between programs and
their evaluation contexts, and the duality between the {\tt let}
construct (which binds programs) and a control operator such as
Parigot's $\mu$ (which binds evaluation contexts). 
Such an analysis can be done in the context of the
$\lmmt$-calculus~\cite{CurHer00,HerbelinHdR}.

In the call-by-name and call-by-value cases, the approach based on
$\lmmt$-calculus leads to continuation-passing style semantics
similar to the ones given by Plotkin or, in the call-by-name case, also to
the one by Lafont, Reus and Streicher~\cite{LafReuStr93}.
As for call-by-need,
in~\cite{AriDowHerNakSau12} is defined the $\lbv$-calculus, a call-by-need version
of the $\lmmt$-calculus. A continuation-passing style semantics is then defined via a
calculus called $\lbvtstar$~\cite{AriDowHerNakSau12}. 
This semantics, which is different from Okasaki, Lee and Tarditi's one~\cite{OkaLeeTar94}, 
is the object of study in this paper.

\subsection{Explicit environments}
While the results presented in this paper
could be directly expressed using the $\lbv$-calculus,
the realizability interpretation naturally arises from the decomposition of this calculus into 
a different calculus with an explicit \emph{environment}, the $\lbvtstar$-calculus~\cite{AriDowHerNakSau12}.
Indeed, as we shall see in the sequel, the decomposition highlights different syntactic categories 
that are deeply involved in the type system and in the definition of the realizability interpretation.

The $\lbvtstar$-calculus is a reformulation of the $\lbv$-calculus with explicit environments,
called  \emph{stores} and denoted by $\tau$.
Stores consists of a list of bindings of the form $[x:=t]$, where $x$ is a term variable and $t$ a term,
and of bindings of the form $[\alpha:=e]$ where $\alpha$ is a context variable and $e$ a context. 
For instance, in the closure $c\tau[x:=t]\tau'$, the variable $x$ is bound to $t$ in $c$ and $\tau'$.
Besides, the term $t$ might be an unevaluated term (\emph{i.e.} lazily stored),
so that if $x$ is eagerly demanded at some point during the execution of this closure, $t$ will be reduced in order to obtain a value.
In the case where $t$ indeed produces a value $V$, the store will be updated with the binding $[x:=V]$. 
However, a binding of this form (with a value) is fixed for the rest of the execution.
As such, our so-called stores somewhat behave like lazy explicit substitutions or mutable 
environments.

To draw the comparison
between our structures and the usual notions of stores and environments, two things should be observed.
First, the usual notion of store refers to a structure of list that is fully mutable, in the sense that the cells can be updated at any time and thus values might be replaced. 
Second, the usual notion of environment designates a structure in which variables are bounded to closures made of a term and an environment. 
In particular, terms and environments are duplicated, \emph{i.e.} sharing is not allowed. Such a structure resemble to a tree whose nodes
are decorated by terms, as opposed to a machinery allowing sharing (like ours) whose underlying structure is broadly a directed acyclic graph.
See for instance~\cite{Lang07} for a Krivine abstract machine with sharing.

\subsection{Syntax \& reduction rules}
\label{sec:syntax}

The lazy evaluation of terms allows for the following reduction
rule:
us to reduce a command $\cut{\mu\alpha .c}{\tmu x.c'}$ to the command $c'$ together with the binding $[x:=\mu\alpha.c]$.
$$\cut{\mu\alpha .c}{\tmu x.c'} \red c'[x:=\mu\alpha.c]$$
In this case, the term $\mu\alpha.c$ is left unevaluated (``frozen'') in the store, until possibly reaching a command in which the variable $x$ is needed.
When evaluation reaches a command of the form $\cut{x}{F}\tau[x:=\mu\alpha.c]\tau'$, the binding is opened and the term is evaluated in front of the context $\tmu[x].\cut{x}{F}\tau'$:
 $$\cut{x}{F}\tau[x:=\mu\alpha.c]\tau'\red\cut{\mu\alpha.c}{\tmu[x].\cut{x}{F}\tau'}\tau$$
The reader can think of the previous rule as the ``defrosting'' operation of the frozen term $\mu\alpha.c$\,:
this term is evaluated in the prefix of the store $\tau$ which predates it, in front of the context $\tmu[x].\cut{x}{F}\tau'$
where the $\tmu[x]$ binder is waiting for a value. 
This context keeps trace of the part of the store $\tau'$ that was originally located after the binding $[x:=...]$.
This way, if a value $V$ is indeed furnished for the binder $\tmu[x]$, the original command $\cut{x}{F}$ is evaluated in the updated full store:
 $$\cut{V}{\tmu[x].\cut{x}{F}\tau'}\tau \red \cut{V}{F}\tau[x:=V]\tau'$$
The brackets in $\tmu[x].c$ are used 
to express the fact that the variable $x$ is forced at top-level (unlike contexts of the shape $\tmu x.C[\cut{x}{F}]$ in the $\lbv$-calculus).
The reduction system resembles the one of an abstract machine. 
Especially, it allows us to keep the standard redex at the top of a command and avoids searching through the
meta-context for work to be done.

Note that our approach slightly differ from \cite{AriDowHerNakSau12}
since we split values into two categories: strong values ($v$) and
weak values ($V$). The strong values correspond to values strictly
speaking. The weak values include the variables which force the
evaluation of terms to which they refer into shared strong
value. Their evaluation may require capturing a continuation.
The syntax of the language, which includes
constants $\cbold$ and co-constants $\alphabold$,
is given in Figure \ref{fig:reduction-rules}.
As for the reduction $\rightarrow$,
we define it as the compatible reflexive transitive closure of the rules
  given in Figure \ref{fig:reduction-rules}.
 {
\begin{figure}[t]
\framebox{\vbox{
$$
\begin{array}{l@{\qquad}rll}
\mbox{Strong values} 	& v   & ::= & \lambda x.t \mid \cbold   \\
\mbox{Weak values} 	& V   & ::= & v \mid  x 		\\
\mbox{Terms}		& t,u & ::= & V \mid  \mu\alpha.c 	\\
\mbox{Forcing contexts}   & F & ::= & t \cdot E \mid \alphabold \\
\mbox{Catchable contexts}  & E & ::= & F \mid  \alpha \mid  \tmu[x].\cut{x}{F}\tau\\
\mbox{Evaluation contexts} & e & ::= & E \mid  \tmu x.c\\
\mbox{Stores} & \tau & ::= & \varepsilon \mid  \tau[x:=t] \mid   \tau[\alpha:=E]\\
\mbox{Commands} & c & ::= & \cut{t}{e} \\ 
\mbox{Closures} & l & ::= & c\tau \\ 
\end{array}$$
\hrule
$$
\begin{array}{c@{\qquad\rightarrow\qquad}c}
\cut{\lambda x.t}{u\cdot E}\tau      & \cut{u}{\tmu x.\cut{t}{E}}\tau \\
\cut{t}{\tmu x.c }\tau               & c\tau[x:=t] \\
\cut{\mu\alpha.c}{E}\tau             & c\tau[\alpha:=E] \\
\cut{V}{\alpha}\tau[\alpha:=E]\tau'  & \cut{V}{E}\tau[\alpha:=E]\tau' \\
\cut{x}{F}\tau[x:=t]\tau'            & \cut{t}{\tmu[x].\cut{x}{F}\tau'}\tau \\
\cut{V}{\tmu[x].\cut{x}{F}\tau'}\tau & \cut{V}{F}\tau[x:=V]\tau' 
\end{array}
\leqno
\begin{array}{l}	
 (\textsc{Beta}         )\\
 (\textsc{Let}          )\\
 (\textsc{Catch}        )\\
 (\textsc{Lookup}_\alpha)\\
 (\textsc{Lookup}_x     )\\
 (\textsc{Restore}      )\\
\end{array}
$$
}}
\caption{Syntax and reduction rules of the $\lbvtstar$-calculus}

\label{fig:reduction-rules}
\end{figure}
}

The different syntactic categories can be understood as the different
levels of alternation in a context-free abstract 
machine (see~\cite{AriDowHerNakSau12}):
the priority is first given to contexts at level $e$ (lazy storage of terms),
then to terms at level $t$ (evaluation of $\mu\alpha$ into values),
then back to contexts at level $E$ and so on until level~$v$.
These different categories are directly reflected in the definition of the 
abstract machine defined in~\cite{AriDowHerNakSau12},
and will thus be involved in the definition of our realizability interpretation.
We chose to highlight this by distinguishing different types of sequents
already in the typing rules that we shall now present.

\subsection{A type system for the $\lbvtstar$-calculus.}
\label{sec:typing}
\begin{figure}[t]
  \framebox
{\vbox{
\input{figures/typing_rules_lbvtstar}
}}
\caption{Typing rules of the $\lbvtstar$-calculus}
\label{fig:typing-rules}
\end{figure}


We have nine kinds of (one-sided) sequents, one for typing each of the nine
syntactic categories. We write them with an annotation on the $\vdash$
sign, using one of the letters $v$, $V$, $t$, $F$, $E$, $e$, $l$, $c$,
$\tau$.
Sequents typing values and
terms are asserting a type, with the type written on the right;
sequents typing contexts are expecting a type $A$ with the type written
$A^\negt$; sequents typing commands and closures are black boxes neither 
asserting nor expecting a type; sequents typing substitutions are 
instantiating a typing context.
In other words, we have the following nine kinds of sequents:
\begin{center}
\begin{tabular}{l@{\qquad\quad}l@{\qquad\quad}l}
\begin{tabular}{l}
$\Gamma \vdash_l l$\\
$\Gamma \vdash_c c$\\
$\Gamma \vdash_\tau \tau:\Gamma' $\\
\end{tabular}
&
\begin{tabular}{l}
$\Gamma \vdash_t t:A $\\
$\Gamma \vdash_V V:A $\\
$\Gamma \vdash_v v:A $\\
\end{tabular}
&
\begin{tabular}{l}
$\Gamma \vdash_e e:A^\negt $\\
$\Gamma \vdash_E E:A^\negt $\\
$\Gamma \vdash_F F:A^\negt $\\
\end{tabular}
\end{tabular}

\end{center}
where types and typing contexts are defined by:
$$
 A,B  ::=  X \mid  A \imp B \qquad\qquad\qquad
 \Gamma ::=  \varepsilon \mid  \Gamma, x:A\mid \Gamma,\alpha:A^\negt
$$

The typing rules are given on Figure \ref{fig:typing-rules} where we assume 
that a variable $x$ (resp. co-variable $\alpha$) only occurs once in a context $\Gamma$
(we implicitly assume the possibility of renaming variables by $\alpha$-conversion).
We also adopt the convention that constants $\cbold$ and co-constants $\alphabold$ come with a signature $\S$
which assigns them a type.
This type system enjoys the property of subject reduction.


\begin{theorem}[Subject reduction]
\label{thm:subject}
 If $\Gamma \vdash_l c\tau$ and $c\tau\rightarrow c'\tau'$ then $\Gamma\vdash_l c'\tau'$.
\end{theorem}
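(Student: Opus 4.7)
The plan is to proceed by case analysis on the reduction rule applied in $c\tau \rightarrow c'\tau'$. Because the typing of a closure $\Gamma \vdash_l c\tau$ decomposes uniquely as $\Gamma,\Gamma' \vdash_c c$ together with $\Gamma \vdash_\tau \tau : \Gamma'$ (rule $\lrule$), for each reduction rule I will invert the typing derivation of the left-hand side to extract the premises, then rebuild a typing derivation for the right-hand side.

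Before handling the cases, I would establish two auxiliary facts. First, a \emph{weakening lemma}: if $\Gamma \vdash_? p : T$ holds for any of the nine kinds of sequents and $\Gamma'$ extends $\Gamma$ with fresh (co-)variables, then $\Gamma' \vdash_? p : T$ still holds. This is a routine induction on derivations. Second, a \emph{store typing decomposition} observation: if $\Gamma \vdash_\tau \tau_1\tau_2 : \Gamma_0$, then $\Gamma_0$ splits as $\Gamma_1,\Gamma_2$ with $\Gamma \vdash_\tau \tau_1 : \Gamma_1$ and $\Gamma,\Gamma_1 \vdash_\tau \tau_2 : \Gamma_2$; conversely such pieces can be recombined. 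This follows immediately from the shape of the rules $\epsrule$, $\tautrule$, $\tauErule$.

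The easy cases are \textsc{Beta}, \textsc{Let}, \textsc{Catch}, and $\textsc{Lookup}_\alpha$. For \textsc{Beta}, inverting $\cut{\lambda x.t}{u\cdot E}$ yields $\Gamma,\Gamma' , x:A \vdash_t t:B$, $\Gamma,\Gamma' \vdash_t u:A$ and $\Gamma,\Gamma' \vdash_E E:B^\negt$, from which one reassembles $\cut{u}{\tmu x.\cut{t}{E}}$ using $\mutrule$ and $\crule$. For \textsc{Let}, the new store binding is immediately justified by $\tautrule$. For \textsc{Catch}, symmetrically by $\tauErule$. For $\textsc{Lookup}_\alpha$, inverting the store typing gives some $(\alpha:A^\negt) \in \Gamma_0$, and from $\Gamma \vdash_\tau \tau[\alpha:=E]\tau' : \Gamma_0$ one recovers a typing $\Gamma,\Gamma_0 \vdash_E E:A^\negt$ by weakening, which is exactly what is needed to retype $\cut{V}{E}$.

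The main obstacle is the pair of rules $\textsc{Lookup}_x$ and \textsc{Restore}, which manipulate the eager context $\tmu[x].\cut{x}{F}\tau'$. For $\textsc{Lookup}_x$, from $\Gamma \vdash_l \cut{x}{F}\tau[x:=t]\tau'$ I obtain by decomposition $\Gamma \vdash_\tau \tau:\Gamma_1$, $\Gamma,\Gamma_1 \vdash_t t:A$, $\Gamma,\Gamma_1,x:A \vdash_\tau \tau':\Gamma_2$ and $\Gamma,\Gamma_1,x:A,\Gamma_2 \vdash_F F:A^\negt$. Then exactly the last two facts are the premises of the $\eagerrule$ rule, giving $\Gamma,\Gamma_1 \vdash_e \tmu[x].\cut{x}{F}\tau':A^\negt$; combined with $\Gamma,\Gamma_1 \vdash_t t:A$ via $\crule$ and with $\Gamma \vdash_\tau \tau:\Gamma_1$ via $\lrule$, this yields the required typing of $\cut{t}{\tmu[x].\cut{x}{F}\tau'}\tau$. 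For \textsc{Restore}, the inversion produces $\Gamma,\Gamma_1 \vdash_V V:A$, and we need to rebuild $\cut{V}{F}\tau[x:=V]\tau'$: we use $\lifttrule$ to get $\Gamma,\Gamma_1 \vdash_t V:A$ so that $\tautrule$ applies to form $\Gamma \vdash_\tau \tau[x:=V]:\Gamma_1,x:A$, then stack $\tau'$ on top using the decomposition lemma, and finally re-type $\cut{V}{F}$ in the extended context via weakening on $V$. Closing under compatible contexts is immediate because all reduction rules operate at the top of a closure, so no induction on evaluation positions is required.
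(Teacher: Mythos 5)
Your proposal is correct and follows essentially the same route as the paper's proof: a case analysis on the reduction rules, inverting the typing derivation of the left-hand closure via rule \lrule{} and rebuilding the right-hand one, with an admissible weakening lemma (the paper's Lemma~\ref{lm:lbvtstar_weak}) used exactly where you invoke it (e.g.\ for $E$ in \textsc{Beta} and for $V$ in \textsc{Restore}). Your explicit store-typing decomposition lemma is only implicit in the paper (it compacts the $\tau'$ judgments into a proof sketch $\Pi_{\tau'}$), but it is the same content.
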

\begin{proof}
 By induction on typing derivations\vlong{, see Appendix \ref{s:sub_red}}.\qed
\end{proof}

\section{Normalization of the {\lbvtstar}-calculus}
\label{s:real}
\renewcommand{\mathsf}[1]{\texttt{#1}}
\newcommand{\dom}{\mathsf{dom}}
\newcommand{\ct}{term-in-store}
\newcommand{\cts}{terms-in-store}
\newcommand{\ce}{context-in-store}

\subsection{Normalization by realizability}
The proof of normalization for the $\lbvtstar$-calculus that we present in this section 
is inspired from techniques of Krivine's classical realizability~\cite{Krivine04}, whose notations we borrow.
Actually, it is also very close to a proof by reducibility\footnote{See for instance the proof of normalization for system $D$ presented in~\cite[3.2]{Krivine93}.}.
In a nutshell, to each type $A$ is associated a set $|A|_t$ of terms whose execution is guided by the structure of $A$.
These terms are the ones usually called \emph{realizers} in Krivine's classical realizability.
Their definition is in fact indirect, and is done by orthogonality to a set of ``correct'' computations,
called a \emph{pole}. The choice of this set is central when studying models induced by classical realizability
for second-order-logic, but in the present case we only pay attention to the particular pole of terminating computations.
This is where lies one of the difference with usual proofs by reducibility, 
where everything is done with respect to $SN$, while our definition are parametric in the pole (which is chosen to be $SN$ in the end).
The adequacy lemma, which is the central piece,
consists in proving that typed terms belong to the corresponding sets of realizers, and are thus normalizing.

More in details, our proof can be sketched as follows.
First, we generalize the usual notion of closed term to the notion of closed \emph{\ct}.
Intuitively, this is due to the fact that we are no longer interested in closed terms and substitutions to close opened terms, 
but rather in terms that are closed when considered in the current store. 
This is based on the simple observation that a store is nothing more than a shared substitution whose 
content might evolve along the execution.
Second, we define the notion of \emph{pole} $\pole$, which are sets of closures closed by anti-evaluation and store extension.
In particular, the set of normalizing closures is a valid pole.
This allows to relate terms and contexts thanks to a notion of orthogonality with respect to the pole. 
We then define for each formula $A$ and typing level $o$ (of $e,t,E,V,F,v$) 
a set $|A|_o$ (resp. $\|A\|_o$) of terms (resp. contexts) in the corresponding syntactic category.
These sets correspond to reducibility candidates, or to what is usually called truth values and falsity values in Krivine realizability.
Finally, the core of the proof consists in the adequacy lemma, 
which shows that any closed term of type $A$ at level $o$ is in the corresponding set $|A|_o$.
This guarantees that any typed closure is in any pole, and in particular in the pole of normalizing closures.
Technically, the proof of adequacy evaluates in each case a state of an abstract machine (in our case a closure), 
so that the proof also proceeds by evaluation.
A more detailed explanation of this observation as well as a more introductory presentation of normalization proofs by classical realizability
are given in an article by Dagand and Scherer~\cite{DagSch15}.

\subsection{Realizability interpretation for the $\lbvtstar$-calculus}
\par
We begin by defining some key notions for stores that we shall need further in the proof.
\begin{definition}[Closed store]
 We extend the notion of free variable to stores:\vspace{-0.4em}
 $$\begin{array}{c@{~~\defeq~~}l}
 FV(\varepsilon) 	& \emptyset \\
 FV(\tau[x:=t]) 	& FV(\tau)\cup\{y\in FV(t):y\notin \dom(\tau)\} \\
 FV(\tau[\alpha:=E])	& FV(\tau)\cup\{\beta \in FV(E):\beta \notin \dom(\tau)\}
 \end{array}$$\vspace{-0.4em}
so that we can define a \emph{closed store} to be a store $\tau$ such that $FV(\tau) = \emptyset$.
\end{definition}

\begin{definition}[Compatible stores]
 We say that two stores $\tau$ and $\tau'$ are \emph{independent} and write $\indpt{\tau}{\tau'}$ when
 ${\dom(\tau)\cap\dom(\tau')=\emptyset}$.
 We say that they are \emph{compatible} and write $\compat{\tau}{\tau'}$ 
 whenever for all variables $x$ (resp. co-variables $\alpha$) present in both stores: ${x\in \dom(\tau)\cap\dom(\tau')}$;
 the corresponding terms (resp. contexts) in $\tau$ and $\tau'$ 
 coincide.
 Finally, we say that $\tau'$ is an \emph{extension} of $\tau$ and write $\tau\stext \tau'$ whenever 
 $\dom(\tau)\subseteq\dom(\tau')$ and $\compat{\tau}{\tau'}$.
 \end{definition}

 We denote by $\overline{\tau\tau'}$ the compatible union $\join{\tau}{\tau'}$ of closed stores $\tau$ and $\tau'$, defined by:\vspace{-0.4em}
 $$\begin{array}{r@{~~\defeq~~}l}
 \join{\tau_0[x:=t]\tau_1}{\tau'_0[x:=t]\tau'_1} & \tau_0\tau'_0[x:=t]\join{\tau_1}{\tau'_1} \\
  \join{\tau}{\tau'} & \tau\tau'                                                             \\
  \join{\varepsilon}{\tau} & \tau \\
 \join{\tau}{\varepsilon} &\tau\\
 \end{array}\eqno\begin{array}{r}(\text{if } \indpt{\tau_0}{\tau_0'})\\(\text{if }\indpt{\tau}{\tau'})\\\\\\\end{array} 
 $$
 
The following lemma (which follows easily from the previous definition) states the main property  
we will use about union of compatible stores.
\begin{lemma}
\label{lm:st_union}
If $\tau$ and $\tau'$ are two compatible stores, then $\tau\stext\overline{\tau\tau'}$ and $\tau'\stext\overline{\tau\tau'}$.
Besides, if $\tau$ is of the form $\tau_0[x:=t]\tau_1$,
then $\overline{\tau\tau'}$ is of the form ${\tau_2}[x:=t]{\tau_3}$ with $\tau_0 \stext {\tau_2}$
and $\tau_1\stext {\tau_3}$.
\end{lemma}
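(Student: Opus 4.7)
Both statements should follow by induction on the total number of bindings of $\tau$ and $\tau'$, following the case analysis in the defining clauses of $\join{\cdot}{\cdot}$. The first step is to check that on compatible stores, the clauses are exhaustive: if neither store is empty and they are not independent, then by compatibility they share some variable, so one may locate the leftmost one (say in $\tau$), giving the decomposition $\tau = \tau_0[x:=t]\tau_1$, $\tau' = \tau_0'[x:=t]\tau_1'$ with $\indpt{\tau_0}{\tau_0'}$ required by the first clause. This view turns the definition into a recursion on the leftmost shared binding and justifies the induction.

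For the first part, I would treat each clause separately. When one of the stores is empty, $\overline{\tau\tau'}$ is the other and the extension is immediate. When $\indpt{\tau}{\tau'}$, we have $\overline{\tau\tau'}=\tau\tau'$; domain inclusions are obvious and compatibility of each with $\tau\tau'$ follows since the two domains are disjoint, so no binding can mismatch. In the remaining case, $\overline{\tau\tau'}=\tau_0\tau_0'[x:=t]\overline{\tau_1\tau_1'}$; applying the induction hypothesis to the pair $(\tau_1,\tau_1')$ gives $\tau_1,\tau_1' \stext \overline{\tau_1\tau_1'}$, and then $\tau_0 \stext \tau_0\tau_0'$ (and symmetrically for $\tau_0'$) using $\indpt{\tau_0}{\tau_0'}$ lets me patch the pieces together into $\tau,\tau' \stext \overline{\tau\tau'}$.

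For the second part, I would again induct on the same measure and track where the distinguished binding $[x:=t]$ surfaces in $\overline{\tau\tau'}$. If $x \notin \dom(\tau')$, only clauses~2--4 can apply at the outermost level and $[x:=t]$ is preserved in position, giving $\tau_2 \supseteq \tau_0$ and $\tau_3 \supseteq \tau_1$ (up to interleaving with fragments of $\tau'$ that land on either side); the extension claims then follow from the first part applied to the sub-unions on each side. If $x \in \dom(\tau')$, then by compatibility $\tau'=\tau_0'[x:=t]\tau_1'$ as well, and the leftmost shared variable is either $x$ itself or some earlier one; either way a straightforward induction on the number of shared variables strictly to the left of $x$ produces the required $\tau_2$ and $\tau_3$ and shows $\tau_0 \stext \tau_2$, $\tau_1 \stext \tau_3$.

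The main obstacle I foresee is purely bookkeeping: the definition of $\join{\cdot}{\cdot}$ is not fully syntax-directed (clauses 2--4 overlap), so the induction must be organised along a canonical reading of the clauses — choosing clause~1 whenever applicable and otherwise clause~2, with clauses~3--4 as base cases — in order to make the recursive structure unambiguous. Once this reading is fixed, the arguments reduce to unfolding the definition together with the elementary observation that $\stext$ is preserved by concatenation with independent stores.
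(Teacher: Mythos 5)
Your proof is correct and takes essentially the paper's (implicit) route: the paper simply asserts that the lemma ``follows easily from the previous definition,'' and your induction along the recursive clauses of $\join{\tau}{\tau'}$ (after checking that the clauses are exhaustive on compatible stores via the leftmost shared variable) is exactly the natural elaboration of that remark. One minor slip worth fixing in a full write-up: when $x\notin\dom(\tau')$, clause~1 can still fire at the outermost level on some \emph{other} shared variable, so your claim that only clauses~2--4 apply there is not literally true — but the same recursive descent you describe for the case $x\in\dom(\tau')$ (tracking where $[x:=t]$ surfaces through successive clause-1 steps, using that $\tau_1$ and $\tau_1'$ remain compatible) repairs this without changing the structure of the argument.
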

\begin{proof}
This follows easily from the previous definition.\qed
\end{proof}


\newcommand{\tis}[2]{(#1|#2)}
As we explained in the introduction of this section, we will not consider closed terms in the usual sense.
Indeed, while it is frequent in the proofs of normalization (\emph{e.g.} by realizability or reducibility) of a calculus to consider
only closed terms and to perform substitutions to maintain the closure of terms, this only makes sense if it corresponds to the 
computational behavior of the calculus. For instance, to prove the normalization of $\lambda x.t$ in
typed call-by-name $\lmmt$-calculus, one would consider a substitution $\rho$ that is suitable for 
with respect to the typing context $\Gamma$, then a context $u\cdot e$ of type $A\to B$, and evaluates :
$$\cut{\lambda x.t_\rho}{u\cdot e} ~~\rightarrow~~\cut{t_\rho[u/x]}{e}$$
Then we would observe that $t_\rho[u/x] = t_{\rho[x:=u]}$ and deduce that $\rho[x:=u]$ is suitable for $\Gamma,x:A$, 
which would allow us to conclude by induction.

However, in the $\lbvtstar$-calculus we do not perform global substitution when reducing a command, but rather add a new binding $[x:=u]$ in the store:
$$\cut{\lambda x.t}{u\cdot E}\tau ~~\rightarrow~~\cut{t}{E}\tau[x:=u]$$
Therefore, the natural notion of closed term invokes the closure under a store, 
which might evolve during the rest of the execution (this is to contrast with a substitution).

\begin{definition}[Term-in-store]
We call \emph{closed \ct} (resp. \emph{closed \ce}, \emph{closed closures}) 
the combination of a term $t$ (resp. context $e$, command $c$) with a closed store $\tau$ such that
$FV(t)\subseteq \dom(\tau)$. 
We use the notation $\tis{t}{\tau}$ (resp. $\tis{e}{\tau}, \tis{c}{\tau}$)
to denote such a pair. 
\end{definition}
We should note that in particular, if $t$ is a closed term, then $\tis{t}{\tau}$ is a {\ct} for any closed store $\tau$. 
The notion of {closed \ct} is thus a generalization of the notion of closed terms, and we will (ab)use of this 
terminology in the sequel. We denote the sets of closed closures by $\C_0$, and will identify $\tis{c}{\tau}$ and the closure $c\tau$ when $c$ is closed in $\tau$.
Observe that if $c\tau$ is a closure in $\C_0$ and $\tau'$ is a store extending $\tau$, then $c\tau'$ is also in $\C_0$.
We are now equipped to define the notion of pole, and verify that the set of normalizing closures is indeed a valid pole.
\begin{definition}[Pole]
 A subset $\pole\subseteq \C_0$ is said to be \emph{saturated} or \emph{closed by anti-reduction} 
 whenever for all $\tis{c}{\tau},\tis{c'}{\tau'}\in\C_0$,  if $c'\tau' \in \pole$ and $c\tau\rightarrow c'\tau'$ then $c\tau\in\pole$.
 It is said to be \emph{closed by store extension} if whenever $c\tau\in\pole$, for any store $\tau'$ extending $\tau$: $\tau\stext\tau'$, $c\tau'\in\pole$.
 A \emph{pole} is defined as any subset of $\C_0$ that is closed by anti-reduction and store extension.
\end{definition}

The following proposition is the one supporting the claim that our realizability proof is almost
a reducibility proof whose definitions have been generalized with respect to a pole instead of the fixed set SN.
\begin{proposition}
\label{prop:norm_pole}
 The set $\pole_{\Downarrow}=\{c\tau\in\C_0:~c\tau\text{ normalizes }\}$ is a pole.
\end{proposition}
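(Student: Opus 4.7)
The plan is to verify the two closure properties defining a pole for the set $\pole_\Downarrow$ of normalizing closures, namely closure under anti-reduction and closure under store extension.

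\textbf{Closure under anti-reduction.} Suppose $c\tau \to c'\tau'$ with $c'\tau' \in \pole_\Downarrow$. A terminating reduction sequence starting at $c'\tau'$ can be extended to a terminating reduction sequence starting at $c\tau$ by prepending the one-step reduction $c\tau \to c'\tau'$. Hence $c\tau \in \pole_\Downarrow$. This step is essentially immediate and uses nothing beyond the definition of normalization.

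\textbf{Closure under store extension.} Suppose $c\tau \in \pole_\Downarrow$ and $\tau \stext \tau'$. I would prove that $c\tau' \in \pole_\Downarrow$ via a simulation lemma: if $c\tau \to c_1\tau_1$ and $\tau \stext \tau'$, then there exists $\tau'_1$ with $\tau_1 \stext \tau'_1$ such that $c\tau' \to c_1\tau'_1$. The proof of this lemma proceeds by case analysis on the applied reduction rule. For $\textsc{Beta}$, $\textsc{Let}$ and $\textsc{Catch}$, which only append a new binding at the tail of the store, the simulation is immediate: the same rule fires on $c\tau'$ and extends $\tau'$ with the same binding, yielding a target store that still extends $\tau_1$. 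For $\textsc{Lookup}_\alpha$, $\textsc{Lookup}_x$ and $\textsc{Restore}$, which decompose the store around a specific binding, one uses the second part of Lemma~\ref{lm:st_union}: writing $\tau = \tau_0[x:=t]\tau_1$ (or similarly for a co-variable binding), the extended store $\tau'$ decomposes compatibly as $\tau'_0[x:=t]\tau'_1$ with $\tau_0 \stext \tau'_0$ and $\tau_1 \stext \tau'_1$, so the same rule fires and produces a target store that again extends the one obtained from $c\tau \to c_1\tau_1$. Iterating this simulation along a terminating reduction sequence of $c\tau$ produces a parallel terminating reduction sequence of $c\tau'$, which establishes $c\tau' \in \pole_\Downarrow$.

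The main obstacle is precisely the store-splitting cases of the simulation lemma, where one must verify that the extra bindings introduced by $\tau'$ redistribute harmlessly between the prefix and suffix of the decomposition around the accessed binding, without altering which rule fires, and that the resulting stores are again in the $\stext$ relation. Lemma~\ref{lm:st_union} is tailored precisely to this situation, so once invoked the remaining work is careful but routine bookkeeping on the structure of stores.
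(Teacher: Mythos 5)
Your proof follows the same route as the paper, which disposes of both conditions in a few lines: anti-reduction is immediate, and for store extension the paper simply observes that, $c$ being closed in $\tau$, the closure $c\tau'$ reduces ``as $c\tau$ does'', the extra bindings of $\tau'$ being inert. Your simulation lemma is the natural way to make that observation precise, and invoking Lemma~\ref{lm:st_union} for the store-splitting rules is indeed the right ingredient.

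One caveat on the precise statement, though: as formulated (``there exists $\tau'_1$ with $\tau_1\stext\tau'_1$ such that $c\tau'\to c_1\tau'_1$'', with the \emph{same} command $c_1$) the lemma is not preserved by the rule $(\textsc{Lookup}_x)$. That rule moves the suffix of the store into the command: from $\cut{x}{F}\tau_0[x:=t]\tau_1$ one reaches $\cut{t}{\tmu[x].\cut{x}{F}\tau_1}\tau_0$, so if the extension places additional bindings after $[x:=t]$, the reduct of $c\tau'$ is $\cut{t}{\tmu[x].\cut{x}{F}\tau_3}\tau_2$ with $\tau_1\stext\tau_3$, and the resulting command differs from $c_1$. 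The invariant you iterate must therefore be generalized to a relation on closures that also allows the stores embedded under $\tmu[x]$ binders (hence inside catchable contexts occurring in the command) to be extended; with that strengthened invariant your case analysis goes through, together with the usual $\alpha$-conversion freshness care when $(\textsc{Let})$, $(\textsc{Catch})$ or $(\textsc{Restore})$ append a binding whose variable could clash with the extra bindings of $\tau'$. This is bookkeeping rather than a flaw in the idea---the paper itself asserts the property with less justification than you give---but the one-step lemma as you state it would not iterate as written.
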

\begin{proof}
 As we only considered closures in $\C_0$, both conditions (closure by anti-reduction and store extension) are clearly satisfied:
 \begin{itemize}
  \item if $c\tau \rightarrow c'\tau'$ and $c'\tau'$ normalizes, then $c\tau$ normalizes too;
  \item if $c$ is closed in $\tau$ and $c\tau$ normalizes, if $\tau\stext \tau'$ then $c\tau'$ will reduce as $c\tau$ does
  (since $c$ is closed under $\tau$, it can only use terms in $\tau'$ that already were in $\tau$) and thus will normalize.\qed
 \end{itemize}
\end{proof}

\begin{definition}[Orthogonality]
Given a pole $\pole$, we say that a {\ct} $\tis{t}{\tau}$ is {\em orthogonal} to a {\ce} $\tis{e}{\tau'}$
and write $\tis{t}{\tau}\orth\tis{e}{\tau'}$
if $\tau$ and $\tau'$ are compatible and $\cut{t}{e}\overline{\tau\tau'}\in\pole$.
\end{definition}
\begin{remark}
 The reader familiar with Krivine's forcing machine~\cite{Krivine11} might recognize his definition of orthogonality 
 between terms of the shape $(t,p)$ and stacks of the shape $(\pi,q)$, where $p$ and $q$ are forcing conditions\footnote{The meet of forcing conditions is indeed a refinement containing somewhat the ``union'' of information contained in each, 
 just like the union of two compatible stores.}:
 $$ (t,p) \pole (\pi,q) \Leftrightarrow (t\star\pi,p\land q) \in \pole$$
 
\end{remark}

We can now relate closed terms and contexts by orthogonality with respect to a given pole. 
This allows us to define for any formula $A$ 
the sets $\tvv{A},\tvV{A},\tvt{A}$ (resp. $\fvF{A}$,$\fvE{A}$, $\fve{A}$) 
of realizers (or reducibility candidates) at level $v$, $V$, $t$ (resp. $F$, $E$, $e$) for the formula $A$. 
It is to be observed that realizers are here closed {\cts}.

\begin{definition}[Realizers]
\label{def:realizers}
 Given a fixed pole $\pole$, we set:\vspace{-0.5em}
 $$\begin{array}{ccl}
     \tvv{X} 	 & = & \{\tis{\cbold}{\tau} : \quad\vdash \cbold:{X}\}\\
     \tvv{A\imp B} & = & \{\tis{\lambda x .t}{\tau} : \forall u \tau', \compat{\tau}{\tau'}\land \tis{u}{\tau'}\in\tvt{A} \Rightarrow \tis{t}{\overline{\tau\tau'}[x:=u]}\in\tvt{B}\}\\
     \fvF{A} 	 & = & \{\tis{F}{\tau} : \forall v \tau', \compat{\tau}{\tau'}\land \tis{v}{\tau'}\in\tvv{A} \Rightarrow \tis{v}{\tau'}\orth \tis{F}{\tau}\}\\
     \tvV{A} 	 & = & \{\tis{V}{\tau} : \forall F \tau', \compat{\tau}{\tau'}\land \tis{F}{\tau'}\in\fvF{A} \Rightarrow \tis{V}{\tau} \orth \tis{F}{\tau'}\}\\
     \fvE{A} 	 & = & \{\tis{E}{\tau} : \forall V \tau', \compat{\tau}{\tau'}\land \tis{V}{\tau'}\in\tvV{A} \Rightarrow \tis{V}{\tau'}\orth \tis{E}{\tau}\}\\
     \tvt{A} 	 & = & \{\tis{t}{\tau} : \forall E \tau', \compat{\tau}{\tau'}\land \tis{E}{\tau'}\in\fvE{A} \Rightarrow \tis{t}{\tau} \orth \tis{E}{\tau'}\}\\
     \fve{A} 	 & = & \{\tis{e}{\tau} : \forall t \tau', \compat{\tau}{\tau'}\land \tis{t}{\tau'}\in\tvt{A} \Rightarrow \tis{t}{\tau'}\orth \tis{e}{\tau}\}\\

   \end{array}$$
\end{definition}
\begin{remark}
 We draw the reader attention to the fact that we should actually write $\tvv{A}^\pole,\fvF{A}^\pole$, etc...
and $\tau\real_{\!\!\pole}\!\Gamma$, because the corresponding definitions 
are parameterized by a pole $\pole$. 
As it is common in Krivine's classical realizability, we ease the notations by
removing the annotation $\pole$ whenever there is no ambiguity on the pole.
Besides, it is worth noting that if co-constants do not occur directly in the definitions, they
 may still appear in the realizers by mean of the pole.
\end{remark}

If the definition of the different sets might seem complex at first sight, 
we claim that they are quite natural in regards of the methodology of Danvy's semantics artifacts presented in~\cite{AriDowHerNakSau12}.
Indeed, having an abstract machine in context-free form (the last step in this methodology before deriving the CPS)
allows us to have both the term and the context (in a command) that behave independently of each other.
Intuitively, a realizer at a given level is precisely a term which is going to behave well (be in the pole)
in front of any opponent chosen in the previous level (in the hierarchy $v,F,V$,etc...). 
For instance, in a call-by-value setting, there are only three levels of definition (values, contexts and terms) in the interpretation,
because the abstract machine in context-free form also has three. 
Here the ground level corresponds to strong values, and the other levels are somewhat defined as terms (or context) 
which are well-behaved in front of any opponent in the previous one. 
The definition of the different sets $\tvv{A},\fvF{A},\tvV{A}$, etc... directly stems from this intuition.

In comparison with the usual definition of Krivine's classical realizability, we only considered 
orthogonal sets restricted to some syntactical subcategories.
However, the definition still satisfies the usual monotonicity properties of bi-orthogonal sets:
\begin{proposition}
\label{prop:monotonicity}
For any type $A$ and any given pole $\pole$, we have:
\vspace{-0.5em}
\begin{multicols}{2}
   \begin{enumerate}
   \item $\tvv{A}\subseteq \tvV{A} \subseteq \tvt{A}$;
   \item $\fvF{A}\subseteq \fvE{A} \subseteq \fve{A}$.
  \end{enumerate}
\end{multicols}  
\end{proposition}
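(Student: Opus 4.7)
The plan is to prove each inclusion by pure definition-chasing, exploiting the bi-orthogonality structure of the sets in Definition~\ref{def:realizers}. In each case the set on the right-hand side of an inclusion is defined as an orthogonal of the set on the left-hand side of the \emph{previous} inclusion in the chain, so once we know that the underlying syntactic category embeds (strong values are weak values, weak values are terms, forcing contexts are catchable contexts, catchable contexts are evaluation contexts, by the grammar in Figure~\ref{fig:reduction-rules}), the inclusion will follow from the classical fact that a base set is always contained in its bi-orthogonal.

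Concretely, for $\tvv{A}\subseteq\tvV{A}$, I would take $\tis{v}{\tau}\in\tvv{A}$ and, noting that $v$ is syntactically a weak value, check the defining condition of $\tvV{A}$: given $\tis{F}{\tau'}\in\fvF{A}$ with $\compat{\tau}{\tau'}$, we must show $\tis{v}{\tau}\orth\tis{F}{\tau'}$; but this is precisely the universally quantified property that $\tis{F}{\tau'}$ satisfies as an element of $\fvF{A}$, instantiated at $\tis{v}{\tau}$. The inclusions $\tvV{A}\subseteq\tvt{A}$ and $\fvE{A}\subseteq\fve{A}$ are verified in exactly the same way, by unfolding the target definition and invoking the defining orthogonality property of the argument set (respectively $\fvE{A}$ for elements of $\tvV{A}$ tested against catchable contexts, and $\tvt{A}$ for elements of $\fvE{A}$ tested against terms).

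The two remaining inclusions $\fvF{A}\subseteq\fvE{A}$ and the analogous step on the right-hand chain are symmetric: for $\tis{F}{\tau}\in\fvF{A}$, to show membership in $\fvE{A}$ we pick $\tis{V}{\tau'}\in\tvV{A}$ with $\compat{\tau}{\tau'}$ and must produce $\tis{V}{\tau'}\orth\tis{F}{\tau}$; this is exactly the defining property of $\tis{V}{\tau'}$ as an element of $\tvV{A}$, applied to the forcing context $\tis{F}{\tau}$, which qualifies since $F$ is syntactically also a catchable context. No induction on $A$ is needed, and each of the six inclusions is obtained by a single definitional unfolding.

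There is no real obstacle: the only point requiring minimal care is ensuring that the compatibility hypotheses line up (the roles of $\tau$ and $\tau'$ are swapped between the two orthogonality clauses being composed, but $\diamond$ is symmetric) and that the syntactic inclusions of the grammar categories are used silently to re-type an element of one category as an element of the next. Once these bookkeeping points are made explicit, each inclusion collapses to a one-line application of the corresponding defining clause.
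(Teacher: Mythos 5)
Your proof is correct and follows essentially the same route as the paper: each inclusion is a one-step definitional unfolding, using the syntactic embedding of the category on the left into the one on the right and instantiating the universally quantified orthogonality clause of the opposing set (the paper spells out only $\tvv{A}\subseteq\tvV{A}$ and notes the other cases are identical). Your explicit treatment of the remaining inclusions and of the compatibility bookkeeping matches what the paper leaves implicit.
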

\begin{proof}
 All the inclusions are proved in a similar way. 
 We only give the proof for $\tvv{A}\subseteq \tvV{A}$.
 Let $\pole$ be a pole and $\tis{v}{\tau}$ be in $\tvv{A}$.
 We want to show that $\tis{v}{\tau}$ is in $\tvV{A}$, 
 that is to say that $v$ is in the syntactic category $V$ (which is true),
 and that for any $\tis{F}{\tau'}\in\fvF{A}$ 
 such that $\compat{\tau}{\tau'}$, $\tis{v}{\tau}\orth\tis{F}{\tau'}$. 
 The latter holds by definition of $\tis{F}{\tau'}\in\fvF{A}$,
 since $\tis{v}{\tau}\in\tvv{A}$.\qed
\end{proof}

We now extend the notion of realizers to stores, by stating that a store $\tau$ realizes a context $\Gamma$
if it binds all the variables $x$ and $\alpha$ in $\Gamma$ to a realizer of the corresponding formula. 

\begin{definition}Given a closed store $\tau$ and a fixed pole $\pole$, 
we say that $\tau$ \emph{realizes} $\Gamma$, which we write\footnote{Once again, 
we should formally write $\tau\real_{\!\!\pole}\!\Gamma$ but we will omit the annotation by $\pole$ as often as possible.} $\tau \Vdash \Gamma$, if:
\begin{enumerate}
 \item for any $(x:A) \in\Gamma$, $\tau\equiv \tau_0[x:=t]\tau_1$ and $\tis{t}{\tau_0} \in \tvt{A}$
 \item for any $(\alpha:A^\negt) \in\Gamma$, $\tau\equiv \tau_0[\alpha:=E]\tau_1$ and $\tis{E}{\tau_0} \in \fvE{A}$\
 \end{enumerate}
 \label{def:store_real}
\end{definition}

In the same way than weakening rules (for the typing context) are admissible for each level of the typing system :
$$
\infer={\Gamma'\vdash_t t:A}{\Gamma\vdash_t t:A & \Gamma\subseteq\Gamma'} 		\quad\qquad
\infer={\Gamma'\vdash_e e:A^\negt}{\Gamma\vdash_e e:A^\negt & \Gamma\subseteq\Gamma'} 	\quad~~
\text{\raisebox{1em}{$\dots$}}\quad~~
\infer={\Gamma'\vdash_\tau \tau:\Gamma''}{\Gamma\vdash_\tau \tau:\Gamma'' & \Gamma\subseteq\Gamma'}$$
the definition of realizers is compatible with a weakening of the store.
\begin{lemma}[Store weakening]
\label{lm:st_weak}
 Let $\tau$ and $\tau'$ be two stores such that $\tau\stext\tau'$, let $\Gamma$ be a typing context and 
 let $\pole$ be a pole. The following statements hold:
 \begin{enumerate}
  \item $\overline{\tau\tau'} = \tau'$
  \item If ~$\tis{t}{\tau} \in \tvt{A}$~ for some closed term $\tis{t}{\tau}$ and type $A$, then ~$\tis{t}{\tau'}\in\tvt{A}$. 
  The same holds for each level $e,E,V,F,v$ of the typing rules.
  \item If ~$\tau\real \Gamma$~ then ~$\tau' \real \Gamma$.
 \end{enumerate}
\end{lemma}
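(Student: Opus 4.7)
The plan is to tackle the three statements in order, with parts (2) and (3) feeding off part (1) and the closure of the pole under store extension.

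For part (1), I would proceed by induction on $\tau$ (or, essentially equivalently, on the length of $\dom(\tau') \setminus \dom(\tau)$), using the defining clauses of $\join{\cdot}{\cdot}$. The base case $\tau = \varepsilon$ is direct from $\join{\varepsilon}{\tau'}=\tau'$. For the inductive step, write $\tau = \tau_0[x:=t]\tau_1$; since $\tau \stext \tau'$ the binding $[x:=t]$ must occur in $\tau'$ with the same right-hand side, and by the compatibility condition one can split $\tau' \equiv \tau_0'[x:=t]\tau_1'$ in such a way that the first clause of the union applies and the inductive hypothesis yields the result on $\tau_1$.

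For part (2), the key observation is that all six realizability sets except $\tvv{A}$ for base types are defined purely by orthogonality, so stability under store extension is really stability of the orthogonality relation. Concretely, suppose $\tis{t}{\tau}\in\tvt{A}$ and $\tau \stext \tau'$. Given any challenger $\tis{E}{\tau_1}\in\fvE{A}$ with $\compat{\tau'}{\tau_1}$, the inclusion $\dom(\tau)\subseteq\dom(\tau')$ immediately gives $\compat{\tau}{\tau_1}$, hence $\cut{t}{E}\overline{\tau\tau_1}\in\pole$. A straightforward check shows $\overline{\tau\tau_1}\stext\overline{\tau'\tau_1}$, and then closure of $\pole$ under store extension yields $\cut{t}{E}\overline{\tau'\tau_1}\in\pole$, i.e., $\tis{t}{\tau'}\orth\tis{E}{\tau_1}$. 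The same pattern handles the levels $V$, $F$, $E$ and $e$. The only genuinely different case is $\tis{\lambda x.t}{\tau}\in\tvv{A\imp B}$, where one proceeds by induction on $A$: given a challenger $\tis{u}{\tau_1}\in\tvt{A}$ compatible with $\tau'$, one uses compatibility with $\tau$ to obtain $\tis{t}{\overline{\tau\tau_1}[x:=u]}\in\tvt{B}$, and then applies the already-proven store-extension stability at level $t$ for $B$ along the extension $\overline{\tau\tau_1}[x:=u] \stext \overline{\tau'\tau_1}[x:=u]$. The base type case $\tvv{X}$ is immediate since the definition does not constrain the store.

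For part (3), take any $(x:A)\in\Gamma$. The assumption $\tau\real\Gamma$ gives a decomposition $\tau\equiv\tau_0[x:=t]\tau_1$ with $\tis{t}{\tau_0}\in\tvt{A}$. By part (1), $\tau' = \overline{\tau\tau'}$, and Lemma~\ref{lm:st_union} rewrites this union as $\tau_2[x:=t]\tau_3$ with $\tau_0\stext\tau_2$. Part (2) then lifts $\tis{t}{\tau_0}\in\tvt{A}$ to $\tis{t}{\tau_2}\in\tvt{A}$, which is exactly the condition required of $\tau'$ at $(x:A)$. The co-variable bindings $(\alpha:A^\negt)\in\Gamma$ are treated symmetrically at level $E$.

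The main obstacle I expect is the careful bookkeeping in part (2): because the six realizability sets are mutually defined (with $\tvv{A\imp B}$ referring back to $\tvt{A}$ and $\tvt{B}$), one has to arrange the induction so that the type-level induction for $\tvv{A\imp B}$ is done on top of the uniform orthogonality argument for the five other levels. Once this layering is fixed, everything else is a routine manipulation of pole closure and the union of compatible stores.
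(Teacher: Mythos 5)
Your proposal is correct and follows essentially the same route as the paper's proof: part (1) from the definition of the compatible union, part (2) by transferring compatibility from $\tau'$ to $\tau$ and invoking closure of the pole under store extension, and part (3) via Lemma~\ref{lm:st_union} combined with parts (1) and (2). The only difference is that you spell out the non-orthogonality case $\tvv{A\imp B}$ explicitly (reducing it to the already-established level-$t$ statement for $B$, so no genuine type induction is even needed), a detail the paper's proof leaves implicit.
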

\begin{proof}
\begin{enumerate}
 \item Straightforward from the definition of $\bar{\tau\tau'}$.
 \item
 This essentially amounts to the following observations. First, one remarks that if $\tis{t}{\tau}$ is a closed term, so 
 then so is $\tis{t}{\overline{\tau\tau'}}$ for any closed store $\tau'$ compatible with $\tau$.
 Second, we observe that if we consider for instance a closed context $\tis{E}{\tau''}\in\fvE{A}$,
 then $\compat{\overline{\tau\tau'}}{\tau''}$ implies $\compat{\tau}{\tau''}$, thus $\tis{t}{\tau}\orth\tis{E}{\tau''}$
 and finally $\tis{t}{\overline{\tau\tau'}}\orth\tis{E}{\tau''}$ by closure of the pole under store extension.
 We conclude that $\tis{t}{\tau'}\orth\tis{E}{\tau''}$ using the first statement.
 
 \item
 By definition, for all $(x:A)\in\Gamma$, $\tau$ is of the form $\tau_0[x:=t]\tau_1$ such that 
 $\tis{t}{\tau_0}\in\tvt{A}$.
 As $\tau$ and $\tau'$ are compatible, we know by Lemma~\ref{lm:st_union} that $\overline{\tau\tau'}$ is of the form $\tau'_0[x:=t]\tau'_1$ 
 with $\tau'_0$ an extension of $\tau_0$, and using the first point we get that $\tis{t}{\tau'_0}\in\tvt{A}$.\qed
 \end{enumerate}
\end{proof}

\begin{definition}[Adequacy]
Given a fixed pole~$\pole$, we say that:
\begin{itemize}
\item A typing judgment $\Gamma\vdash_t t:A$ is \emph{adequate}
  (w.r.t.\ the pole~$\pole$) if for all stores $\tau\real\Gamma$,
  we have $\tis{t}{\tau} \in \tvt{A}$.
\item More generally, we say that an inference rule
  $$\infer{J_0}{J_1 &\cdots& J_n}$$
  is adequate (w.r.t.\ the pole~$\pole$) if the adequacy of all typing
  judgments $J_1,\ldots,J_n$ implies the adequacy of the typing
  judgment $J_0$.
\end{itemize}
\end{definition}

\begin{remark}
\label{rmk:adequacy}
From the latter definition, it is clear that a typing judgment that is
derivable from a set of adequate inference rules is adequate too.
\end{remark}

We will now show the main result of this section, namely that the typing rules 
of Figure~\ref{fig:typing-rules} for the $\lbvtstar$-calculus without co-constants
are adequate with any pole. 
Observe that this result requires to consider the $\lbvtstar$-calculus without co-constants. 
Indeed, we consider co-constants as coming with their typing rules, 
potentially giving them any type (whereas constants can only be given an atomic type).
Thus, there is \emph{a priori} no reason\footnote{Think for instance of a co-constant of type $(A\imp B)^\negt$, 
there is no reason why it should be orthogonal to any function in $\tvv{A\imp B}$.}
why their types should be adequate with any pole. 

However, as observed in the previous remark, given a fixed pole it 
suffices to check whether the typing rules for a given co-constant
are adequate with this pole. If they are, any judgment that is derivable
using these rules will be adequate.

\begin{theorem}[Adequacy]\label{lm:adequacy}
If $\Gamma$ is a typing context, $\pole$ is a pole and $\tau$ is a store such that ${\tau\real \Gamma}$,
then the following holds in the $\lbvtstar$-calculus without co-constants:
\begin{enumerate}
 \item If $v$ is a strong value 	such that $\Gamma\vdash_v v:A$, 	then $\tis{v}{\tau} \in\tvv{A}$.
 \item If $F$ is a forcing context 	such that $\Gamma\vdash_F F:A^\negt$, 	then $\tis{F}{\tau} \in\fvF{A}$.
 \item If $V$ is a weak value   	such that $\Gamma\vdash_V V:A$, 	then $\tis{V}{\tau} \in\tvV{A}$.
 \item If $E$ is a catchable context 	such that $\Gamma\vdash_E E:A^\negt$, 	then $\tis{E}{\tau} \in\fvF{A}$.
 \item If $t$ is a term    		such that $\Gamma\vdash_t t:A$, 	then $\tis{t}{\tau} \in\tvt{A}$.
 \item If $e$ is a context 		such that $\Gamma\vdash_e e:A^\negt$, 	then $\tis{e}{\tau} \in\fve{A}$.
 \item If $c$ is a command		such that $\Gamma\vdash_c c$, 		then $c\tau \in \pole$. 
 \item If $\tau'$ is a store	such that $\Gamma\vdash_\tau \tau':\Gamma'$, 	then $\tau\tau' \real \Gamma,\Gamma'$. 
\end{enumerate}

\end{theorem}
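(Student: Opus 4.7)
The plan is to prove all eight statements simultaneously by mutual induction on the typing derivations. For each typing rule of Figure~\ref{fig:typing-rules} I assume the adequacy of its premises and derive that of its conclusion, relying on three fundamental tools: the monotonicity inclusions of Proposition~\ref{prop:monotonicity}, the store weakening lemma (Lemma~\ref{lm:st_weak}), and the structure of compatible unions given by Lemma~\ref{lm:st_union}.

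The routine cases are dispatched first. The four lifting rules \liftVrule, \lifttrule, \liftErule and \lifterule follow directly from the inclusions $\tvv{A}\subseteq\tvV{A}\subseteq\tvt{A}$ and $\fvF{A}\subseteq\fvE{A}\subseteq\fve{A}$. The constant rule \cboldrule is immediate from the definition of $\tvv{X}$. The co-variable rule \alpharule and the store rules \epsrule, \tautrule, \tauErule are straightforward unpackings of the definition of $\tau\real\Gamma$, promoting realizers established over a prefix of the store to realizers over the whole store by Lemma~\ref{lm:st_weak}.

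For the introduction rules, each induction step consists in picking an opponent at the appropriate level of the hierarchy, applying a reduction step of the abstract machine, invoking the induction hypothesis on the reduct, and concluding by closure of $\pole$ under anti-reduction and store extension. For \imprrule, an opponent $\tis{u}{\tau'}\in\tvt{A}$ compatible with $\tau$ immediately yields, by the definition of $\tvv{A\imp B}$, a proof obligation $\tis{t}{\overline{\tau\tau'}[x:=u]}\in\tvt{B}$ that follows from the IH applied to $t$ with the enlarged store $\overline{\tau\tau'}[x:=u]\real\Gamma,x:A$. The \implrule rule for $t\cdot E$ proceeds dually: an opponent $\tis{\lambda x.s}{\tau'}\in\tvv{A\imp B}$ is reduced by \textsc{Beta} followed by \textsc{Let}, and the realizer property of $\lambda x.s$ combined with the IHs for $t$ and $E$ places the resulting command in $\pole$. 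The rules \murule and \mutrule are handled analogously via \textsc{Catch} and \textsc{Let}; the command rule \crule is a direct application of orthogonality.

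The main obstacle lies in the interaction between the variable rule \xrule and the lazy forcing rule \eagerrule for $\tmu[x].\cut{x}{F}\tau_0$, because both are where the non-trivial reductions \textsc{Lookup}$_x$ and \textsc{Restore} become entangled with compatible unions of stores. For \xrule, given an adversary $\tis{F}{\tau'}\in\fvF{A}$ compatible with $\tau$, the hypothesis $\tau\real\Gamma$ yields a decomposition $\tau=\tau_0[x:=t]\tau_1$ with $\tis{t}{\tau_0}\in\tvt{A}$; by Lemma~\ref{lm:st_union} the union $\overline{\tau\tau'}$ takes the form $\tau_2[x:=t]\tau_3$ with $\tau_0\stext\tau_2$, so that \textsc{Lookup}$_x$ rewrites $\cut{x}{F}\overline{\tau\tau'}$ to $\cut{t}{\tmu[x].\cut{x}{F}\tau_3}\tau_2$. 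One must then verify that $\tis{\tmu[x].\cut{x}{F}\tau_3}{\tau_2}\in\fvE{A}$, which is precisely what the IH for \eagerrule delivers: given a further adversary $\tis{V}{\tau''}\in\tvV{A}$, the rule \textsc{Restore} reduces the ensuing command to one in which $V$ faces $F$ under a store containing $[x:=V]$, and that command lies in $\pole$ by the realizer property of $F$. The careful bookkeeping of compatible unions through these successive reductions is the delicate part of the argument, but each step is warranted by Lemmas~\ref{lm:st_union} and~\ref{lm:st_weak} together with the closure properties of $\pole$.
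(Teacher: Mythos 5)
Your proposal is correct and follows essentially the same route as the paper: mutual induction over the typing derivations, the lift rules discharged by Proposition~\ref{prop:monotonicity}, store bookkeeping via Lemmas~\ref{lm:st_union} and~\ref{lm:st_weak}, and anti-reduction for the machine steps, with the crucial \xrule/\eagerrule interplay handled exactly as in the paper. One wording nitpick: in the \xrule case, the needed fact $\tis{\tmu[x].\cut{x}{F}\tau_3}{\tau_2}\in\fvE{A}$ is not literally an ``IH for \eagerrule'' (the adversary $F$ carries no typing derivation); it is re-proved by the same argument using the assumption $\tis{F}{\tau'}\in\fvF{A}$ plus store weakening, which is what you in fact do and what the paper also does.
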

\begin{proof}
\renewcommand{\prfcasetext}{Rule}
The different statements are proved by mutual induction over typing derivations.
We only give the most important cases here\vlong{, the exhaustive induction is given in Appendix~\ref{a:real}}.


%

\prfcase{\implrule}
Assume that $$\infer[\implrule]{\Gamma\vdash_F u\cdot E:(A\imp B)^\negt}{\Gamma\vdash_t u:A & \Gamma \vdash_E E:B^\negt}$$ 
and let $\pole$ be a pole and $\tau$ a store such that $\tau \real \Gamma$.
Let $\tis{\lambda x.t}{\tau'}$ be a closed term in the set $\tvv{A\to B}$ such that $\compat{\tau}{\tau'}$, then we have:
$$\cut{\lambda x.t}{u\cdot E}\overline{\tau\tau'}
~~\rightarrow ~~\cut{u}{\tmu x.\cut{t}{E}}\overline{\tau\tau'}
~~\rightarrow ~~\cut{t}{E}\overline{\tau\tau'}[x:=u]$$
By definition of $\tvv{A\to B}$, this closure is in the pole, and we can conclude by anti-reduction.


\prfcase{\xrule} 
Assume that 
$$\infer[\xrule]{\Gamma\vdash_V x:A}{(x:A)\in\Gamma}$$ 
and let $\pole$ be a pole and $\tau$ a store such that $\tau \real \Gamma$.
As $(x:A)\in\Gamma$, we know that $\tau$ is of the form $\tau_0[x:=t]\tau_1$
with $\tis{t}{\tau_0}\in\tvt{A}$.
Let $\tis{F}{\tau'}$ be in $\fvF{A}$, with  $\compat{\tau}{\tau'}$. By Lemma~\ref{lm:st_union},
we know that $\overline{\tau\tau'}$ is of the form $\overline{\tau_0}[x:=t]\overline{\tau_1}$.
Hence we have:
$$\cut{x}{F} \overline{\tau_0}[x:=t]\overline{\tau_1} ~~\rightarrow~~ \cut{t}{\tmu[x].\cut{x}{F}\overline{\tau_1}}\overline{\tau_0}$$
and it suffices by anti-reduction to show that the last closure is in the pole $\pole$. 
By induction hypothesis, we know that $\tis{t}{\tau_0}\in\tvt{A}$ thus we only need to show 
that it is in front of a catchable context in $\fvE{A}$.
This corresponds exactly to the next case that we shall prove now.

\prfcase{\eagerrule} 
Assume that
$$\infer[\eagerrule]{\Gamma\vdash_E \tmu[x].\cut{x}{F}\tau':A}{\Gamma,x:A,\Gamma'\vdash_F F:A & \Gamma,x:A\vdash \tau':\Gamma'}$$ 
and let $\pole$ be a pole and $\tau$ a store such that $\tau \real \Gamma$.
Let $\tis{V}{\tau_0}$ be a closed term in $\tvV{A}$ such that $\compat{\tau_0}{\tau}$.
We have that :
$$\cut{V}{\tmu[x].\cut{x}{F}\overline{\tau'}}\overline{\tau_0\tau}~~\rightarrow~~ \cut{V}{F} \overline{\tau_0\tau}[x:=V]\tau'$$
By induction hypothesis, we obtain $\tau[x:=V]\tau'\real \Gamma,x:A,\Gamma'$. 
Up to $\alpha$-conversion in $F$ and $\tau'$, so that the variables in $\tau'$ are disjoint from those in $\tau_0$,
we have that $\overline{\tau_0\tau}\real\Gamma$ (by Lemma~\ref{lm:st_weak}) and then $\tau''\defeq\overline{\tau_0\tau}[x:=V]\tau'\real \Gamma,x:A,\Gamma'$.
By induction hypothesis again, we obtain that $\tis{F}{\tau''}\in\fvF{A}$ 
(this was an assumption in the previous case) 
and as $\tis{V}{\tau_0}\in\tvV{A}$, we finally get that $\tis{V}{\tau_0}\orth\tis{F}{\tau''}$ 
and conclude again by anti-reduction.\qed

%
%
%
%

 \end{proof}

\begin{corollary}
 If $c\tau$ is a closure such that $\vdash_l c\tau$ is derivable, then for any pole $\pole$ such that 
 the typing rules for co-constants used in the derivation are adequate with $\pole$, $c\tau\in\pole$.
\end{corollary}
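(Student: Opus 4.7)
The plan is to derive this corollary as a direct consequence of the adequacy theorem (Theorem~\ref{lm:adequacy}), together with the observation made in Remark~\ref{rmk:adequacy} about extending adequacy to derivations that invoke typing rules for co-constants.

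First, I would invert the last typing rule used to derive $\vdash_l c\tau$. Since the only rule concluding a sequent of the form $\vdash_l \cdot$ is $\lrule$, the derivation must end with
$$\infer[\lrule]{\vdash_l c\tau}{\Gamma' \vdash_c c & \vdash_\tau \tau : \Gamma'}$$
for some typing context $\Gamma'$. Next, fix a pole $\pole$ with which all typing rules for co-constants appearing in the derivation are adequate. By Remark~\ref{rmk:adequacy}, the conclusion of Theorem~\ref{lm:adequacy} applies to every subderivation of the fixed one, even though the theorem statement restricts to the co-constant-free fragment: the adequacy of the co-constant rules fills in exactly the missing cases.

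Then I would apply statement~(8) of the adequacy theorem to the subderivation $\vdash_\tau \tau : \Gamma'$, taking the ambient context to be empty and the witnessing store to be the empty store $\varepsilon$ (which trivially realizes the empty context). This yields $\tau \real \Gamma'$. Finally, applying statement~(7) to the subderivation $\Gamma' \vdash_c c$ with this store $\tau$ gives $c\tau \in \pole$, as required.

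The only step that requires any care is justifying that the adequacy statement can be invoked at all in the presence of co-constants; this is precisely the content of the hypothesis on $\pole$ and of Remark~\ref{rmk:adequacy}, so no additional argument is needed beyond recording these observations.
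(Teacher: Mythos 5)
Your proof is correct and follows exactly the route the paper intends (it states the corollary without printing a proof): invert the single rule $\lrule$ concluding $\vdash_l c\tau$, use Remark~\ref{rmk:adequacy} together with the hypothesis on the co-constant rules to extend Theorem~\ref{lm:adequacy} to the full derivation, then apply statement~(8) with the empty store realizing the empty context to get $\tau\real\Gamma'$, and statement~(7) to conclude $c\tau\in\pole$. Nothing is missing.
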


We can now put our focus back on the normalization of typed closures. 
As we already saw in Proposition~\ref{prop:norm_pole}, the set $\pole_{\Downarrow}$ of normalizing closure is a valid pole,
so that it only remains to prove that any typing rule for co-constants is adequate with $\pole_{\Downarrow}$.

\begin{lemma}
 Any typing rule for co-constants is adequate with the pole $\pole_{\Downarrow}$, \emph{i.e.} if $\Gamma$ is a typing context,
 and $\tau$ is a store such that $\tau\real\Gamma$,
 if $\alphabold$ is a co-constant such that $\Gamma\vdash_F \alphabold:A^\negt$, then $\tis{\alphabold}{\tau}\in\fvF{A}$.
\end{lemma}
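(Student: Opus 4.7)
The plan is to unfold the definition of $\fvF{A}$ and verify the orthogonality condition by showing that the relevant closures are already in normal form, so that they trivially belong to $\pole_{\Downarrow}$.

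More precisely, fix a co-constant $\alphabold$ with $\Gamma\vdash_F \alphabold:A^\negt$, a context $\Gamma$ and a store $\tau$ with $\tau\real\Gamma$. To establish $\tis{\alphabold}{\tau}\in\fvF{A}$, I would take an arbitrary pair $\tis{v}{\tau'}\in\tvv{A}$ with $\compat{\tau}{\tau'}$ and show that $\cut{v}{\alphabold}\,\overline{\tau\tau'}$ belongs to $\pole_{\Downarrow}$, which by Proposition~\ref{prop:norm_pole} amounts to proving that this closure normalizes. First I would check that this closure lies in $\C_0$: since $\tau\real\Gamma$ and $\tau'$ comes together with the closed term-in-store $\tis{v}{\tau'}$, both are closed stores, so by Lemma~\ref{lm:st_union} their union $\overline{\tau\tau'}$ is a closed store containing all free variables of $v$, and the co-constant $\alphabold$ has no free variables.

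The key step is then a simple inspection of the reduction rules of Figure~\ref{fig:reduction-rules}: a strong value $v$ is either of the form $\lambda x.t$ or a constant $\cbold$, and the right-hand side $\alphabold$ is a co-constant, hence neither a stack of the form $u\cdot E$, nor a co-variable $\alpha$, nor a binder $\tmu x.c$ or $\tmu[x].\cut{x}{F}\tau_1$. Going through the six reduction rules one by one, none of them matches the pattern $\cut{v}{\alphabold}$, so the closure $\cut{v}{\alphabold}\,\overline{\tau\tau'}$ is stuck. It therefore normalizes in zero steps and belongs to $\pole_{\Downarrow}$, which concludes the verification that $\tis{v}{\tau'}\orth\tis{\alphabold}{\tau}$ and hence that $\tis{\alphabold}{\tau}\in\fvF{A}$.

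I do not foresee any real obstacle: the whole point of introducing co-constants as opaque ``opponents'' with no reduction rule is precisely to make such closures stuck, so that the adequacy of their typing rule is automatic with respect to the normalization pole. The only slightly delicate bookkeeping is ensuring that the closure considered really is in $\C_0$ so that membership in $\pole_{\Downarrow}$ is meaningful, which is handled by Lemma~\ref{lm:st_union} together with the hypothesis $\tau\real\Gamma$.
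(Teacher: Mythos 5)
Your proposal is correct and follows essentially the same route as the paper: the paper's proof is precisely the observation that $\cut{v}{\alphabold}\overline{\tau\tau'}$ does not reduce for any closed strong value $\tis{v}{\tau'}\in\tvv{A}$, hence lies in $\pole_{\Downarrow}$. Your additional bookkeeping (membership in $\C_0$ via Lemma~\ref{lm:st_union} and the rule-by-rule check that no reduction matches) only makes explicit what the paper leaves implicit.
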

 \begin{proof}
This lemma directly stems from the observation that for any store $\tau$ and any closed strong value $\tis{v}{\tau'}\in\tvv{A}$, 
 $\cut{v}{\alphabold}\overline{\tau\tau'}$ does not reduce and thus belongs to the pole $\pole_{\Downarrow}$.
 \end{proof}

As a consequence, we obtain the normalization of typed closures of the full calculus.
\begin{theorem}
\label{thm:normalization}
 If $c\tau$ is a closure of the $\lbvtstar$-calculus such that $\vdash_l c\tau$ is derivable, then $c\tau$ normalizes.
\end{theorem}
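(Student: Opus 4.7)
The plan is to instantiate the general adequacy machinery with the specific pole $\pole_{\Downarrow}$ of normalizing closures. All the hard work has already been done: Proposition~\ref{prop:norm_pole} establishes that $\pole_{\Downarrow}$ is a valid pole, Theorem~\ref{lm:adequacy} shows that every typing rule of the co-constant-free fragment is adequate with any pole, and the preceding lemma verifies that the co-constant rules are in particular adequate with $\pole_{\Downarrow}$. What remains is essentially to assemble these pieces.

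Concretely, I would argue as follows. Fix the pole $\pole \defeq \pole_{\Downarrow}$. Because typed closures are built from the typing rules of Figure~\ref{fig:typing-rules}, every inference rule appearing in a derivation of $\vdash_l c\tau$ is adequate with $\pole_{\Downarrow}$: the rules not involving co-constants by the adequacy theorem, and the rules introducing a co-constant $\alphabold$ in forcing position by the preceding lemma (which handles precisely the only place where co-constants may appear in typing derivations). By Remark~\ref{rmk:adequacy}, the adequacy of each individual rule propagates to the derivable judgment $\vdash_l c\tau$ itself.

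Now apply the corollary just stated: since the judgment $\vdash_l c\tau$ has an empty typing context, the empty store $\varepsilon$ trivially realizes $\varepsilon$ (there is nothing to check in Definition~\ref{def:store_real}), and adequacy for closures (clause~7 of Theorem~\ref{lm:adequacy}, extended with the co-constant case) yields $c\tau \in \pole_{\Downarrow}$. By definition of $\pole_{\Downarrow}$, this means precisely that $c\tau$ normalizes.

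I do not anticipate any real obstacle: the statement is a direct corollary. The only subtlety is to check that the store $\tau$ appearing in the closure is handled correctly, but this is already built into clause~8 of the adequacy theorem (if $\vdash_\tau \tau : \Gamma'$, then $\tau \real \Gamma'$), combined with the closure typing rule $\lrule$ which internally splits $c\tau$ into a command $c$ typed under the context $\Gamma'$ induced by $\tau$. Thus the argument is essentially a one-line instantiation of the corollary to the specific pole $\pole_{\Downarrow}$.
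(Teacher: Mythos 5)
Your proposal is correct and follows exactly the paper's intended argument: instantiate the pole as $\pole_{\Downarrow}$ (valid by Proposition~\ref{prop:norm_pole}), invoke the adequacy theorem together with the lemma that co-constant rules are adequate with $\pole_{\Downarrow}$, and conclude via the corollary that $c\tau\in\pole_{\Downarrow}$, i.e.\ $c\tau$ normalizes. Nothing is missing.
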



This is to be contrasted with Okasaki, Lee and Tarditi's semantics for the call-by-need $\lambda$-calculus,
which is not normalizing in the simply-typed case, as shown in Ariola
{\em et al.}~\cite{AriDowHerNakSau12}.

\subsection{Extension to 2$^{\text{nd}}$-order type systems}
We focused in this article on simply-typed versions of the {\lbv} and {\lbvtstar} calculi. But as it is common in Krivine classical realizability,
first and second-order quantifications (in Curry style) come for free through the interpretation. 
This means that we can for instance extend the language of types to first and second-order predicate logic:
$$\begin{array}{rcl}
   e_1,e_2 &::=& x\mid f(e_1,\ldots,e_k)\\
   A,B     &::=& X(e_1,\ldots,e_k)\mid A\to B \mid \forall x. A\mid \forall X. A
  \end{array}$$

  We can then define the following introduction rules for universal quantifications:
$$
\infer[\ffallrrule]{\Gamma \vdash_v v:\forall x.A}{\Gamma \vdash_v v:A & x\notin FV(\Gamma)}
\hsep
\infer[\sfallrrule]{\Gamma \vdash_v v:\forall X.A}{\Gamma \vdash_v v:A & X\notin FV(\Gamma)}
$$
Observe that these rules need to be restricted at the level of strong values, 
just as they are restricted to values in the case of call-by-value\footnote{For further explanation on 
the need for a value restriction in Krivine realizability, we refer the reader to \cite{Munch09} or \cite{Lepigre16}.}.
As for the left rules, they can be defined at any levels, let say the more general $e$:
$$
\infer[\ffalllrule]{\Gamma \vdash_e e:(\forall x.A)^\negt}{\Gamma \vdash_e e:(A[n/x])^\negt }
\hsep
\infer[\sfalllrule]{\Gamma \vdash_e e:(\forall X.A)^\negt}{\Gamma \vdash_e e:(A[B/X])^\negt}
$$
where $n$ is any natural number and $B$ any formula.
The usual (call-by-value) interpretation of the quantification is defined as an intersection over all the possible instantiations of the variables within the model.
We do not wish to enter into too many details\footnote{Once again, we advise 
the interested reader to refer to \cite{Munch09} or \cite{Lepigre16} for further details.}
on this topic here, but first-order variable are to be instantiated by integers, while second order are to be instantiated
by subset of terms at the lower level, \emph{i.e.} closed strong-values in store (which we write $\mathcal{V}_0$):
$$
\tvv{\forall x.A} = \bigcap_{n\in\mathbb{N}} \tvv{A[n/x]} 
\hsep
\tvv{\forall X.A} = \bigcap_{S\in{\mathbb{N}^k\to\mathcal P}(\mathcal{V}_0)} \tvv{A[S/X]}
$$
where the variable $X$ is of arity $k$.
It is then routine to check that the typing rules are adequate with the realizability interpretation.


\section{Conclusion and further work}
\label{s:conclusion}
In this paper, we presented a system of simple types for a call-by-need calculus with control,
which we proved to be safe in that it satisfies subject reduction (Theorem \ref{thm:subject}) and
that typed terms are normalizing (Theorem \ref{thm:normalization}).
We proved the normalization by means of realizability-inspired interpretation of the $\lbvtstar$-calculus.
Incidentally, this opens the doors to the computational analysis (in the spirit of Krivine realizability) 
of classical proofs using control, laziness and shared memory.

In further work, we intend to present two extensions of the present paper.
First, following the definition of the realizability interpretation, 
we managed to type the continuation-and-store passing style translation
for the $\lbvtstar$-calculus (see \cite{AriDowHerNakSau12}).
Interestingly, typing the translation emphasizes its computational content,
and in particular, the store-passing part is reflected in a Kripke forcing-like
manner of typing the extensibility of the store~\cite[Chapter 6]{these}. 

Second, on a different aspect, the realizability interpretation we introduced could be a first step towards new ways of realizing axioms.
In particular, the first author used in his Ph.D. thesis~\cite[Chapter 8]{these} 
the techniques presented in this paper to give a normalization proof for $\text{dPA}^\omega$, 
a proof system developed by the second author~\cite{Herbelin12}.
Indeed, this proof system allows to define a proof for the axiom of dependent choice
thanks to the use of streams that are lazily evaluated, and was lacking a proper normalization proof.

Finally, to determine the range of our technique, it would be natural to investigate the relation
between our framework and the many different presentations of call-by-need calculi (with or without control).
Amongst other calculi, we could cite Chang-Felleisen presentation of call-by-need~\cite{ChaFel12},
Garcia \emph{et al.} lazy calculus with delimited control~\cite{GarEtAl10} or Kesner's recent paper
on normalizing by-need terms characterized by an intersection type system~\cite{Kesner16}.
To this end, we might rely on Pédrot and Saurin's classical by-need~\cite{PedSau16}.
They indeed relate (classical) call-by-need with linear head-reduction from a computational point of view,
and draw the connections with the presentations of Ariola \emph{et al.}~\cite{AriDowHerNakSau12}
and Chang-Felleisen~\cite{ChaFel12}. Ariola \emph{et al.} $\lbv$-calculus
being close to the $\lbvtstar$-calculus (see~\cite{AriDowHerNakSau12} for further details), 
our technique is likely to be adaptable to their framework, and thus to Pédrot and Saurin's system.

\bibliography{biblio-en}

\vlong{
\newpage
\appendix
\section{Subject reduction of the {\lbvtstar}-calculus}
\label{s:sub_red}
We present in this section the proof of subject reduction for
the $\lbvtstar$-calculus (Section \ref{s:lbvtstar}). 
The proof is done by reasoning by induction over the typing derivation,
and relies on the fact that the type system admits a weakening rule.

\begin{lemma}
\label{lm:lbvtstar_weak}
 The following rule is admissible for any level $o$ of the hierarchy $e,t,E,V,F,v,c,l,\tau$:
 $$\infer{\Gamma'\vdash_o o:A}{\Gamma\vdash_o o:A & \Gamma \subseteq \Gamma'}$$
\end{lemma}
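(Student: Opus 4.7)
The plan is to prove the weakening lemma by mutual induction on typing derivations, simultaneously across all nine kinds of sequents ($v$, $V$, $t$, $F$, $E$, $e$, $c$, $l$, $\tau$), since the typing rules of Figure~\ref{fig:typing-rules} are themselves mutually defined. For each rule one assumes that the premises are weakenable (the induction hypotheses) and shows that the conclusion is weakenable too. I would organize the case analysis by grouping rules according to how they interact with the typing context $\Gamma$.

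The bulk of the rules simply propagate $\Gamma$ unchanged from conclusion to premises (e.g.\ \crule, \lifttrule, \liftErule, \lifterule, \liftVrule, \implrule, \lrule, \tautrule, \tauErule where the extra bindings come from another judgment $\tau:\Gamma'$, and so on). For such a rule, given $\Gamma \subseteq \Gamma'$, one applies the induction hypothesis to each premise with the same inclusion, and then reassembles the conclusion with the same rule. The axiom-like rules \xrule, \alpharule, \cboldrule, \alphaboldrule are immediate: $(x{:}A)\in\Gamma$ together with $\Gamma\subseteq\Gamma'$ gives $(x{:}A)\in\Gamma'$ (and for constants and co-constants the side condition concerns only the signature $\S$, which does not depend on $\Gamma$ at all).

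The genuinely delicate cases are the binding rules \imprrule, \murule, \mutrule, and \eagerrule, each of which extends $\Gamma$ in the premise with a fresh variable or co-variable. The standard fix is to appeal to $\alpha$-conversion: before applying the induction hypothesis, rename the bound variable (resp.\ co-variable) so that it does not appear in $\dom(\Gamma')$. For instance, for \imprrule one picks $x \notin \dom(\Gamma')$ so that the inclusion $\Gamma, x{:}A \subseteq \Gamma', x{:}A$ is well-formed as a typing context (our convention being that each variable occurs at most once), then applies the induction hypothesis to $\Gamma, x{:}A \vdash_t t:B$ and concludes with \imprrule. The cases \murule, \mutrule, and the internal $x$ of \eagerrule are handled identically. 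The same issue, slightly amplified, appears with the store-typing rules \tautrule/\tauErule and with \lrule, where one must also rename the variables listed in the auxiliary context $\Gamma'$ output by the store judgment so that $\Gamma'$ stays disjoint from the larger weakening target.

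I expect no conceptual obstacle: the calculus is presented in Curry style with syntax-directed, context-preserving rules, and the only subtlety is the bookkeeping associated with freshness of bound (co-)variables, which is dispatched uniformly by $\alpha$-conversion under the standard Barendregt-style naming convention. The resulting proof is therefore a routine case-by-case verification; I would present only one representative binder case (say \imprrule) and one representative store case (say \tautrule) in detail, stating that the remaining cases follow the same pattern.
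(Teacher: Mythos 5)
Your proof is correct and follows the same route as the paper, which simply states the result follows by an easy induction on typing derivations over the rules of Figure~\ref{fig:typing-rules}; your case analysis (context-propagating rules, axiom rules, and binder rules handled by $\alpha$-conversion under the convention that variables occur at most once in a context) just makes explicit the bookkeeping the paper leaves implicit.
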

\begin{proof}
 Easy induction on typing derivations using the typing rules given in Figure \ref{fig:typing-rules}.
\end{proof}

\begin{ntheorem}{1}
 If $\Gamma \vdash_l c\tau$ and $c\tau\rightarrow c'\tau'$ then $\Gamma\vdash_l c'\tau'$.
\end{ntheorem}
\begin{proof}
By induction over the reduction rules of the $\lbvtstar$-calculus (see Figure \ref{fig:reduction-rules}).
\vspace{-1em}
\prfcase{$\cut{t}{\tmu x.c }\tau                \rightarrow c\tau[x:=t]                           $}
A typing derivation of the closure on the left-hand side is of the form:
$$
\infer{\Gamma \vdash _l \cut{t}{\tmu x.c }\tau}{
  \infer{\Gamma,\Gamma' \vdash_c \cut{t}{\tmu x.c }}{
    \infer{\Gamma,\Gamma' \vdash_t t:A}{\Pi_t}
    &
    \infer{\Gamma,\Gamma' \vdash_e \tmu x.c:A}{
      \infer{\Gamma,\Gamma',x:A \vdash_c c}{\Pi_c}
    }
   &
   \infer{\Gamma\vdash_\tau \tau:\Gamma'}{\Pi_\tau}
   }
}
$$
hence we can derive:
$$
\infer{\Gamma \vdash _l c\tau[x:=t]}{
  \infer{\Gamma,\Gamma',x:A \vdash_c c}{\Pi_c}
  &
  \infer{\Gamma\vdash_\tau \tau[x:=t]:(\Gamma',x:A)}{
    \infer{\Gamma\vdash_\tau \tau:\Gamma'}{\Pi_\tau}
    &
    \infer{\Gamma,\Gamma' \vdash_t t:A}{\Pi_t}
  }
}
$$

\prfcase{$\cut{\mu\alpha.c}{E}\tau              \rightarrow c\tau[\alpha:=E]                      $}
A typing derivation of the closure on the left-hand side is of the form:
$$
\infer{\Gamma \vdash _l \cut{t}{\tmu x.c }\tau}{
  \infer{\Gamma,\Gamma' \vdash_c \cut{\mu \alpha.c }{E}}{
    \infer{\Gamma,\Gamma' \vdash_t \mu \alpha.c:A}{
      \infer{\Gamma,\Gamma',\alpha:A^\negt \vdash_c c}{\Pi_c}
    }
    &
    \infer{\Gamma,\Gamma' \vdash_e E:A^\negt}{
      \infer{\Gamma,\Gamma' \vdash_E E:A^\negt}{\Pi_E}
    }
   }
   &
   \infer{\Gamma\vdash_\tau \tau:\Gamma'}{\Pi_\tau}
}
$$
hence we can derive:
$$
\infer{\Gamma \vdash _l c\tau[\alpha:=E]}{
  \infer{\Gamma,\Gamma',\alpha:A^\negt \vdash_c c}{\Pi_c}
  &
  \infer{\Gamma\vdash_\tau \tau[\alpha:=E]:(\Gamma',\alpha:A^\negt)}{
    \infer{\Gamma\vdash_\tau \tau:\Gamma'}{\Pi_\tau}
    &
    \infer{\Gamma,\Gamma' \vdash_E E:A}{\Pi_E}
  }
}
$$

\prfcase{$\cut{V}{\alpha}\tau[\alpha:=E]\tau'   \rightarrow \cut{V}{E}\tau[\alpha:=E]\tau'        $}
A typing derivation of the closure on the left-hand side is of the form:
$$
{\small
\infer{\Gamma \vdash _l \cut{V}{\alpha}\tau[\alpha:=E]\tau'}{
  \infer{\Gamma,\Gamma_0,\alpha:A^\negt,\Gamma_1 \vdash_c \cut{V}{\alpha}}{
    \infer{\Gamma,\Gamma_0,\alpha:A^\negt,\Gamma_1 \vdash_t V:A}{\Pi_V}
    &
     \infer{\Gamma,\Gamma_0,\alpha:A^\negt,\Gamma_1  \vdash_e \alpha:A^\negt}{
	\infer{\Gamma,\Gamma_0,\alpha:A^\negt,\Gamma_1 \vdash_E \alpha:A^\negt}{
	  \infer{\Gamma,\Gamma_0,\alpha:A^\negt,\Gamma_1 \vdash_F \alpha:A^\negt}{}
      }
     }
    }
   &
   \infer{\Gamma\vdash_\tau \tau[\alpha:=E]\tau' :\Gamma_0,\alpha:A^\negt,\Gamma_1}{
      \infer{\Gamma\vdash_\tau \tau[\alpha:=E] :\Gamma_0,\alpha:A^\negt }{
        \infer{\Gamma \vdash \tau :\Gamma_0}{\Pi_\tau}
  	& 
        \infer{\Gamma,\Gamma_0\vdash_E E:A^\negt}{\Pi_E}
      }
      &
      \Pi_{\tau'}
   }
}
}
$$
where we cheated to compact each typing judgment for $\tau'$ (corresponding to types in $\Gamma_1$) in $\Pi_{\tau'}$. 
Therefore, we can derive:
$$
{\small
\infer{\Gamma \vdash _l \cut{V}{\alpha}\tau[\alpha:=E]\tau'}{
  \infer{\Gamma,\Gamma_0,\alpha:A^\negt,\Gamma_1 \vdash_c \cut{V}{E}}{
    \infer{\Gamma,\Gamma_0,\alpha:A^\negt,\Gamma_1 \vdash_t V:A}{\Pi_V}
    &
     \infer{\Gamma,\Gamma_0,\alpha:A^\negt,\Gamma_1  \vdash_e E:A^\negt}{
	\infer{\Gamma,\Gamma_0,\alpha:A^\negt,\Gamma_1 \vdash_E E:A^\negt}{
	  \Pi_E
      }
     }
    }
   &
   \infer{\Gamma\vdash_\tau \tau[\alpha:=E]\tau' :\Gamma_0,\alpha:A^\negt,\Gamma_1}{
      \infer{\Gamma\vdash_\tau \tau[\alpha:=E] :\Gamma_0,\alpha:A^\negt }{
        \infer{\Gamma \vdash \tau :\Gamma_0}{\Pi_\tau}
  	& 
        \infer{\Gamma,\Gamma_0\vdash_E E:A^\negt}{\Pi_E}
      }
      &
      \Pi_{\tau'}
   }
}
}
$$

\prfcase{$\cut{x}{F}\tau[x:=t]\tau'             \rightarrow \cut{t}{\tmu[x].\cut{x}{F}\tau'}\tau  $}
A typing derivation of the closure on the left-hand side is of the form:
$${\small
\infer{\Gamma \vdash _l \cut{V}{F}\tau[x:=t]\tau'}{
  \infer{\Gamma,\Gamma_0,x:A,\Gamma_1 \vdash_c \cut{x}{F}}{
    \infer{\Gamma,\Gamma_0,x:A,\Gamma_1 \vdash_t x:A}{
      \infer{\Gamma,\Gamma_0,x:A,\Gamma_1 \vdash_V x:A}{}
    }
    &
     \infer{\Gamma,\Gamma_0,x:A,\Gamma_1  \vdash_e F:A^\negt}{\Pi_F
     }
    }
   &
   \infer{\Gamma\vdash_\tau \tau[ x:=t]\tau' :\Gamma_0, x:A,\Gamma_1}{
      \infer{\Gamma\vdash_\tau \tau[ x:=t] :\Gamma_0, x:A }{
        \infer{\Gamma \vdash \tau :\Gamma_0}{\Pi_\tau}
  	& 
        \infer{\Gamma,\Gamma_0\vdash_t t:A}{\Pi_t}
      }
      &
      \Pi_{\tau'}
   }
}
}
$$
hence we can derive:
$${\small
\infer{\Gamma \vdash _l \cut{t}{\tmu[x].\cut{x}{F}\tau'}\tau}{
  \infer{\Gamma,\Gamma_0 \vdash_c \cut{t}{\tmu[x].\cut{x}{F}\tau'}}{
    \infer{\Gamma,\Gamma_0,\Gamma_1 \vdash_t t:A}{
      \Pi_t
    }
    &\hspace{-1cm}
     \infer{\Gamma,\Gamma_0  \vdash_e \tmu[x].\cut{x}{F}\tau':A^\negt}{
      \infer{\Gamma,\Gamma_0  \vdash_E \tmu[x].\cut{x}{F}\tau':A^\negt}{
 	\infer{\Gamma,\Gamma_0,x:A\vdash_l  \cut{x}{F}\tau' }{
 	  \infer{\Gamma,\Gamma_0,x:A,\Gamma_1 \vdash_c \cut{x}{F}}{
 	    \infer{\Gamma,\Gamma_0,x:A,\Gamma_1 \vdash_t x:A}{
	      \infer{\Gamma,\Gamma_0,x:A,\Gamma_1 \vdash_V x:A}{}
	    }
 	    &
 	    \infer{\Gamma,\Gamma_0,x:A,\Gamma_1 \vdash_e F:A^\negt}{\Pi_F}
 	  }
	  &
	  \infer{\Gamma,\Gamma_0,x:A\vdash_\tau \tau':\Gamma_1}{\Pi_{\tau'}}
	}
      }
    }
  }
  &\hspace{-2cm}
  \infer{\Gamma \vdash \tau :\Gamma_0}{\Pi_\tau}
}
}
$$

\prfcase{$\cut{V}{\tmu[x].\cut{x}{F}\tau'}\tau  \rightarrow \cut{V}{F}\tau[x:=V]\tau'             $}
A typing derivation of the closure on the left-hand side is of the form:
$$
{\small
\infer{\Gamma \vdash _l \cut{V}{\tmu[x].\cut{x}{F}\tau'}\tau}{
  \infer{\Gamma,\Gamma_0 \vdash_c \cut{V}{\tmu[x].\cut{x}{F}\tau'}}{
    \infer{\Gamma,\Gamma_0,\Gamma_1 \vdash_t V:A}{
      \Pi_V
    }
    &\hspace{-1cm}
     \infer{\Gamma,\Gamma_0  \vdash_e \tmu[x].\cut{x}{F}\tau':A^\negt}{
      \infer{\Gamma,\Gamma_0  \vdash_E \tmu[x].\cut{x}{F}\tau':A^\negt}{
 	\infer{\Gamma,\Gamma_0,x:A\vdash_l  \cut{x}{F}\tau' }{
 	  \infer{\Gamma,\Gamma_0,x:A,\Gamma_1 \vdash_c \cut{x}{F}}{
 	    \infer{\Gamma,\Gamma_0,x:A,\Gamma_1 \vdash_t x:A}{
	      \infer{\Gamma,\Gamma_0,x:A,\Gamma_1 \vdash_V x:A}{}
	    }
 	    &
 	    \infer{\Gamma,\Gamma_0,x:A,\Gamma_1 \vdash_e F:A^\negt}{\Pi_F}
 	  }
	  &
	  \infer{\Gamma,\Gamma_0,x:A\vdash_\tau \tau':\Gamma_1}{\Pi_{\tau'}}
	}
      }
    }
  }
  &\hspace{-2cm}
  \infer{\Gamma \vdash \tau :\Gamma_0}{\Pi_\tau}
}
}
$$
Therefore we can derive:
$$
{\small
\infer{\Gamma \vdash _l \cut{V}{F}\tau[x:=V]\tau'}{
  \infer{\Gamma,\Gamma_0,x:A,\Gamma_1 \vdash_c \cut{V}{F}}{
    \infer{\Gamma,\Gamma_0,x:A,\Gamma_1 \vdash_t V:A}{
      \Pi_V
    }
    &
     \infer{\Gamma,\Gamma_0,x:A,\Gamma_1  \vdash_e F:A^\negt}{\Pi_F
     }
    }
   &
   \infer{\Gamma\vdash_\tau \tau[ x:=V]\tau' :\Gamma_0, x:A,\Gamma_1}{
      \infer{\Gamma\vdash_\tau \tau[ x:=V] :\Gamma_0, x:A }{
        \infer{\Gamma \vdash \tau :\Gamma_0}{\Pi_\tau}
  	& 
        \infer{\Gamma,\Gamma_0\vdash_t V:A}{\Pi_V}
      }
      &
      \Pi_{\tau'}
   }
}
}
$$
where we implicitly used Lemma~\ref{lm:lbvtstar_weak} to weaken $\Pi_V$:
$$
\infer{\Gamma,\Gamma_0,x:A,\Gamma_1\vdash_t V:A}{
  \infer{\Gamma,\Gamma_0\vdash_t V:A}{\Pi_V}
  & 
  \Gamma,\Gamma_0 \subseteq \Gamma,\Gamma_0,x:A,\Gamma_1
}
$$

\prfcase{$\cut{\lambda x.t}{u\cdot E}\tau       \rightarrow \cut{u}{\tmu x.\cut{t}{E}}\tau        $}

A typing proof for the closure on the left-hand side is of the form:
{
$$
\infer{\Gamma\vdash_l \cut{\lambda x.t}{u\cdot E}\tau}{
 \infer{\Gamma,\Gamma'\vdash_c \cut{\lambda x.t}{u\cdot E}}{
   \infer{\Gamma,\Gamma' \vdash_t \lambda x.t : A \to B}{
   \infer{\Gamma,\Gamma' \vdash_V \lambda x.t : A \to B}{
   \infer{\Gamma,\Gamma' \vdash_v \lambda x.t : A \to B}{
     \infer{\Gamma,\Gamma',x:A \vdash_t t : B }{\Pi_t}
   }
   }
   }
   &
   \infer{\Gamma,\Gamma' \vdash_e u\cdot e : (A\to B)^\negt}{
     \infer{\Gamma,\Gamma' \vdash_E u\cdot e : (A\to B)^\negt}{
       \infer{\Gamma,\Gamma' \vdash_t u: A}{\Pi_u}
       & 
       \infer{\Gamma,\Gamma' \vdash_E E : B^\negt}{\Pi_E}
     }
   }
  }
  &
  \infer{\Gamma\vdash_\tau \tau:\Gamma'}{\Pi_\tau}
}
$$
}
We can thus build the following derivation:

$$
\infer{\Gamma \vdash_l \cut{u}{\tmu x .\cut{t}{E}}\tau}{
  \infer{\Gamma,\Gamma' \vdash_c \cut{u}{\tmu x .\cut{t}{E}}}{
    \infer{\Gamma,\Gamma' \vdash_t u: A}{\Pi_u}
    &
    \infer{\Gamma,\Gamma'\vdash_e \tmu x.\cut{t}{E}:A^\negt}{
      \infer{\Gamma,\Gamma',x:A\vdash_c \cut{t}{E}}{
        \infer{\Gamma,\Gamma',x:A \vdash_t t : B }{\Pi_t}
        &
        \infer{\Gamma,\Gamma',x:A \vdash_e E : B^\negt }{
  	\infer{\Gamma,\Gamma',x:A \vdash_E E : B^\negt }{\Pi_E}
        }
      }
     }
  }
  &
  \infer{\Gamma\vdash_\tau \tau:\Gamma'}{\Pi_\tau}
}
$$
where we implicitly used Lemma~\ref{lm:lbvtstar_weak} to weaken $\Pi_E$:
$$
\infer{\Gamma,\Gamma',x:A\vdash_E E:B^\negt}{
  \infer{\Gamma,\Gamma\vdash_E E:B^\negt}{\Pi_E}
  & 
  \Gamma,\Gamma' \subseteq \Gamma,\Gamma',x:A
}
$$
 
\end{proof}

\section{Adequacy lemma}

\label{a:real}
We give here the full proof of the adequacy lemma for 
the realizability interpretation of the $\lbvtstar$-calculus.

\begin{ntheorem}{17}[Adequacy]
The typing rules of \Cref{fig:typing-rules} for the $\lbvtstar$-calculus without co-constants
are adequate with any pole. 
Namely, if $\Gamma$ is a typing context, $\pole$ is a pole and $\tau$ is a store such that ${\tau\real \Gamma}$,
then the following holds in the $\lbvtstar$-calculus without co-constants:
\begin{enumerate}
 \item If $v$ is a strong value 	such that $\Gamma\vdash_v v:A$, 	then $\tis{v}{\tau} \in\tvv{A}$.
 \item If $F$ is a forcing context 	such that $\Gamma\vdash_F F:A^\negt$, 	then $\tis{F}{\tau} \in\fvF{A}$.
 \item If $V$ is a weak value   	such that $\Gamma\vdash_V V:A$, 	then $\tis{V}{\tau} \in\tvV{A}$.
 \item If $E$ is a catchable context 	such that $\Gamma\vdash_E E:A^\negt$, 	then $\tis{E}{\tau} \in\fvF{A}$.
 \item If $t$ is a term    		such that $\Gamma\vdash_t t:A$, 	then $\tis{t}{\tau} \in\tvt{A}$.
 \item If $e$ is a context 		such that $\Gamma\vdash_e e:A^\negt$, 	then $\tis{e}{\tau} \in\fve{A}$.
 \item If $c$ is a command		such that $\Gamma\vdash_c c$, 		then $c\tau \in \pole$. 
 \item If $\tau'$ is a store	such that $\Gamma\vdash_\tau \tau':\Gamma'$, 	then $\tau\tau' \real \Gamma,\Gamma'$. 
\end{enumerate}
\end{ntheorem}
\begin{proof}
\renewcommand{\prfcasetext}{Rule}
We proceed by induction over the typing rules.

\prfcase{\cboldrule}
This case stems directly from the definition of $\tvv{X}$ for $X$ atomic.

\prfcase{\imprrule}
This case exactly matches the definition of $\tvv{A\imp B}$.
Assume that $$\infer[\imprrule]{\Gamma\vdash_v\lambda x.t:A\imp B}{\Gamma,x:A\vdash_t t:B}$$ 
and let $\pole$ be a pole and $\tau$ a store such that $\tau \real \Gamma$.
If $\tis{u}{\tau'}$ is a closed term in the set $\tvt{A}$, then, up to $\alpha$-conversion
for the variable $x$, $\overline{\tau\tau'} \real \Gamma$ by Lemma~\ref{lm:st_weak}
and $\overline{\tau\tau'}[x:=u] \real \Gamma,x:A$.
Using the induction hypothesis, $\tis{t}{\overline{\tau\tau'}[x:=u]}$ is indeed in $\tvt{B}$.

\prfcase{\implrule}
Assume that $$\infer[\implrule]{\Gamma\vdash_F u\cdot E:(A\imp B)^\negt}{\Gamma\vdash_t u:A & \Gamma \vdash_E E:B^\negt}$$ 
and let $\pole$ be a pole and $\tau$ a store such that $\tau \real \Gamma$.
Let $\tis{\lambda x.t}{\tau'}$ be a closed term in the set $\tvv{A\to B}$ such that $\compat{\tau}{\tau'}$, then we have:
$$\cut{\lambda x.t}{u\cdot E}\overline{\tau\tau'}
~~\rightarrow ~~\cut{u}{\tmu x.\cut{t}{E}}\overline{\tau\tau'}
~~\rightarrow ~~\cut{t}{E}\overline{\tau\tau'}[x:=u]$$
By definition of $\tvv{A\to B}$, this closure is in the pole, and we can conclude by anti-reduction.

\prfcase{\liftVrule} This case, as well as every other case where typing a term (resp. context)
at a higher level of the hierarchy (rules \liftErule, \lifttrule, \lifterule), is a simple consequence of 
Proposition~\ref{prop:monotonicity}.
Indeed, assume for instance that 
$$\infer[\liftVrule]{\Gamma\vdash_V v:A}{\Gamma\vdash_v v:A}$$ 
and let $\pole$ be a pole and $\tau$ a store such that $\tau \real \Gamma$.
By induction hypothesis, we get that $\tis{v}{\tau}\in\tvv{A}$.
Thus if $\tis{F}{\tau'}$ is in $\fvF{A}$, by definition $\tis{v}{\tau}\orth\tis{F}{\tau'}$.

\prfcase{\xrule} 
Assume that 
$$\infer[\xrule]{\Gamma\vdash_V x:A}{(x:A)\in\Gamma}$$ 
and let $\pole$ be a pole and $\tau$ a store such that $\tau \real \Gamma$.
As $(x:A)\in\Gamma$, we know that $\tau$ is of the form $\tau_0[x:=t]\tau_1$
with $\tis{t}{\tau_0}\in\tvt{A}$.
Let $\tis{F}{\tau'}$ be in $\fvF{A}$, with  $\compat{\tau}{\tau'}$. By Lemma~\ref{lm:st_union},
we know that $\overline{\tau\tau'}$ is of the form $\overline{\tau_0}[x:=t]\overline{\tau_1}$.
Hence we have:
$$\cut{x}{F} \overline{\tau_0}[x:=t]\overline{\tau_1} ~~\rightarrow~~ \cut{t}{\tmu[x].\cut{x}{F}\overline{\tau_1}}\overline{\tau_0}$$
and it suffices by anti-reduction to show that the last closure is in the pole $\pole$. 
By induction hypothesis, we know that $\tis{t}{\tau_0}\in\tvt{A}$ thus we only need to show 
that it is in front of a catchable context in $\fvE{A}$.
This corresponds exactly to the next case that we shall prove now.

\prfcase{\eagerrule} 
Assume that
$$\infer[\eagerrule]{\Gamma\vdash_E \tmu[x].\cut{x}{F}\tau':A}{\Gamma,x:A,\Gamma'\vdash_F F:A & \Gamma,x:A\vdash \tau':\Gamma'}$$ 
and let $\pole$ be a pole and $\tau$ a store such that $\tau \real \Gamma$.
Let $\tis{V}{\tau_0}$ be a closed term in $\tvV{A}$ such that $\compat{\tau_0}{\tau}$.
We have that :
$$\cut{V}{\tmu[x].\cut{x}{F}\overline{\tau'}}\overline{\tau_0\tau}~~\rightarrow~~ \cut{V}{F} \overline{\tau_0\tau}[x:=V]\tau'$$
By induction hypothesis, we obtain $\tau[x:=V]\tau'\real \Gamma,x:A,\Gamma'$. 
Up to $\alpha$-conversion in $F$ and $\tau'$, so that the variables in $\tau'$ are disjoint from those in $\tau_0$,
we have that $\overline{\tau_0\tau}\real\Gamma$ (by Lemma~\ref{lm:st_weak}) and then $\tau''\defeq\overline{\tau_0\tau}[x:=V]\tau'\real \Gamma,x:A,\Gamma'$.
By induction hypothesis again, we obtain that $\tis{F}{\tau''}\in\fvF{A}$ (this was an assumption in the previous case) and
as $\tis{V}{\tau_0}\in\tvV{A}$, we finally get that $\tis{V}{\tau_0}\orth\tis{F}{\tau''}$ 
and conclude again by anti-reduction.

\prfcases{\alpharule}
This case is obvious from the definition of $\tau\real\Gamma$.

\prfcase{\murule}
Assume that
$$\infer[\murule]{\Gamma\vdash_t \mu\alpha.c:A}{\Gamma,\alpha:A^\negt\vdash_c c}$$ 
and let $\pole$ be a pole and $\tau$ a store such that $\tau \real \Gamma$.
Let $\tis{E}{\tau'}$ be a closed context in $\fvE{A}$ such that $\compat{\tau}{\tau'}$.
We have that :
$$\cut{\mu\alpha.c}{E}\overline{\tau\tau'}~~\rightarrow~~ c \overline{\tau\tau'}[\alpha:=E]$$
Using the induction hypothesis, we only need to show that $\overline{\tau\tau'}[\alpha:=E]\real \Gamma,\alpha:A^\negt,\Gamma'$ 
and conclude by anti-reduction.
This obviously holds, since $\tis{E}{\tau'}\in\fvE{A}$ and $\overline{\tau\tau'}\real\Gamma$ byLemma~\ref{lm:st_union}.

\prfcase{\mutrule}
This case is identical to the previous one.

\prfcase{\crule}
Assume that
$$\infer[\crule]{\Gamma\vdash_c \cut{t}{e}}{\Gamma\vdash_t t:A & \Gamma \vdash_e e:A^\negt}$$ 
and let $\pole$ be a pole and $\tau$ a store such that $\tau \real \Gamma$.
Then by induction hypothesis $\tis{t}{\tau}\in\tvt{A}$ and $\tis{e}{\tau}\in\fve{A}$, so that
$\cut{t}{e}\tau\in\pole$.

\prfcase{\tautrule}
This case directly stems from the induction hypothesis which exactly matches the definition of {$\tau\tau'[x:=t]\real\Gamma,\Gamma',x:A$}. 
The case for the rule {\tauErule} is identical, and the case for the rule {\epsrule} is trivial.
 \end{proof}

}
\end{document}